\newcommand{\bra}[1]{\langle #1|}
\newcommand{\ket}[1]{|#1\rangle}
\newcommand{\braket}[2]{\langle #1|#2\rangle}
\newcommand{\ketbra}[2]{|#1\rangle\!\langle#2|}
\newtheorem{thm}{Theorem}
\newtheorem{cor}[thm]{Corollary}
\DeclareMathOperator{\Tr}{Tr}
\let\oldmarginpar\marginpar
\renewcommand\marginpar[1]{\-\oldmarginpar[\raggedleft\marginparsize #1]%
{\raggedright\marginparsize #1}}
\newcommand{\cev}[1]{\reflectbox{\ensuremath{\vec{\reflectbox{\ensuremath{#1}}}}}}
\newcommand{\minus}{\scalebox{0.75}[1.0]{$-$}}
\DeclareMathOperator{\erf}{erf}
\newtheorem{thmdef}{Definition}
\begin{document}

\setlength{\tabcolsep}{1ex}

\title{Tunnelling necessitates negative Wigner function}

\author{Yin Long Lin}
\affiliation{Blackett Laboratory, Imperial College London, SW7 2BB}
\author{Oscar C. O. Dahlsten}
\affiliation{Clarendon Laboratory, University of Oxford, Parks Road, Oxford OX13PU, United Kingdom}
\affiliation{London Institute for Mathematical Sciences,
South Street 35a, Mayfair, London, W1K 2XF}
%
%
\date{\today}

\begin{abstract} 
We consider in what sense quantum tunnelling is associated with non-classical probabilistic behaviour. We use the Wigner function quasi-probability description of quantum states. We give a definition of tunnelling that allows us to say whether in a given scenario there is tunnelling or not. We prove that this can only happen if either the Wigner function is negative and/or a certain measurement operator which we call the tunnelling rate operator has a negative Wigner function. 
\end{abstract}

\maketitle

\setcounter{secnumdepth}{0}
\section{Introduction}
Quantum systems can `tunnel' through barriers, a wave-like behaviour that is a prototypical of non-classical behaviour. Tunnelling plays a crucial role in a range of systems, including the theory of radioactivity, in which tunnelling is used to explain how radioactivity emerges from a nucleus despite the surrounding Coulomb barrier \cite{QTR}; and more recent applications include adiabatic quantum computing and annealing \cite{IAM}, and quantum electronics in nano-devices \cite{IA}.

We here try to clarify our understanding of in what sense tunnelling is non-classical, in particular to what extent it is associated with non-classical probabilistic behaviour. This is in order to understand tunnelling better, and we hope that this understanding will help in clarifying to what extent tunnelling allows for non-classical information processing. For example, the crucial component of information processing is the mathematical structure of the probabilistic system, rather than the physical parameters involved.

We find that tunnelling is indeed associated with non-classical probabilistic behaviour. In particular, we show that the Wigner function of the state, a commonly-used and powerful phase space quasi-probability representation of quantum states, and/or a certain operator, which we termed in this Article as tunnelling rate operator, has to contain negative values at some phase space points to be able to tunnel. Here we list two examples: 

\begin{enumerate}
	\item A Gaussian wave packet inbound on a square potential has a positive Wigner function \cite{PSH}. However, once it hits the potential, it loses its Gaussian nature and this forces it to have a negative Wigner function while part of it tunnels;
	
	\item A ground state of a simple harmonic oscillator, which has a positive Wigner function, has some probability of being found 'inside' the binding potential region. The tunnelling rate operator we will define in this Article contains negative values in this case. 
\end{enumerate} 

To be able to derive our main statement we needed to define clearly what tunnelling is and this definition is also one of the contributions of this Article. Finally we note that Wigner function as a real-vector representation of quantum states, fits into the generalised probabilistic framework, and therefore allows for extension to post-quantum theories. To our knowledge, this has not been realised before. Lastly, we discuss the possibility of studying tunnelling in post-quantum theories.  

We proceed as follows. Firstly we give a brief technical introduction to tunnelling and Wigner functions. Then we give new results, our definition of tunnelling and our main theorem. We discuss the interpretation of the theorem. Finally we show that phase space quantum mechanics could be fitted into the generalised probabilistic framework, and demonstrate possible road maps into studying of the phenomenon of tunnelling in post-quantum theories.  


We now give a technical introduction. \textbf{Quantum tunnelling} refers to the phenomenon where a quantum system with insufficient energy penetrates and passes through a potential barrier, which defies the laws of classical mechanics \cite{QTR}. It is usually demonstrated mathematically by solving the one spatial-dimension Schr\"{o}dinger equation with a rectangular potential barrier, such that tunnelling \cite{QMC}\cite{QMG}\cite{QMS}:

\begin{thmdef}
\label{thm:SD}
\textbf{Standard definition of tunnelling.} For a system in a rectangular potential barrier with the form 
\begin{equation}
\label{eq:RPB}
 V(x) = \begin{cases}
  V_0 & x \geq 0,\\
  0 & \text{ otherwise,} 
 \end{cases}
\end{equation}
an energy eigenstate with definite energy $E$ is a tunnelling state iff $E < V_0$ and the probability of finding the state in the region $x \geq 0$ is non-zero.
\end{thmdef} 

Similar behaviours had been studied substantially in quantum systems in other potentials, such as series of rectangular barriers \cite{QTRK} and double wells \cite{QTJM}, where there is finite probability of locating an energy eigenstate in classically forbidden region. However, it is difficult to conceive an analogous definition for quantum systems in a superposition of energy eigenstates or in a more complicated potential, because it is less clear of what the corresponding classically forbidden regions are for both cases.

In order to provide a more general definition of quantum tunnelling and to understand this phenomenon in terms of non-classical probabilistic behaviour, we have chosen the \textbf{phase space formulation of quantum mechanics} as the fundamental framework. In this formulation, the state of a quantum system is described by a quasi-probability distribution, and observables are replaced by ordinary c-number functions in phase space. Mathematically, a quantum state described by a vector in the Hilbert space formulation $\ket{\psi}$ can be transformed to a real function \textit{Wigner function} $W(x,p)$ in the phase space formulation as
\begin{equation}
\label{eq:wig-pure}
 W(x,p) = \frac{1}{\pi \hbar} \int e^{2ipy/\hbar} \psi^*(x+y) \psi(x-y) \,dy.
\end{equation}
Such function satisfies the normalisation condition $\int W(x,p) \,dx\,dp = 1$. However, a Wigner function, in general, is not everywhere positive, and therefore cannot be considered as a legitimate joint probability distribution in phase space. It was demonstrated that a Wigner function of a pure continuous-variable state is positive if and only if the state is Gaussian, which is known as \textit{Hudson's theorem} \cite{PSH}.

\begin{figure}[!htb]
 \centering
  \includegraphics[scale=0.4]{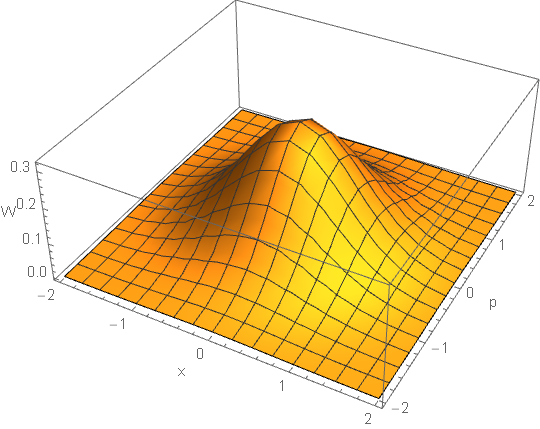}
  \includegraphics[scale=0.4]{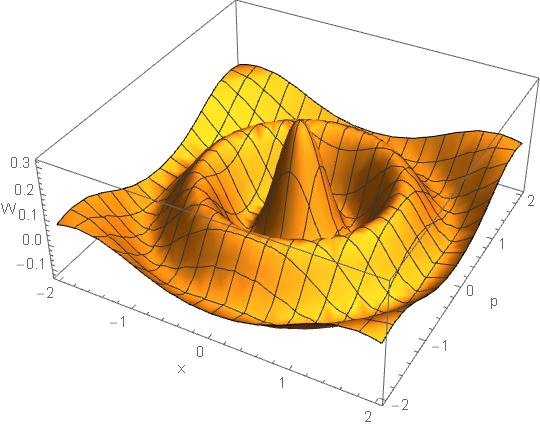}
  \caption{Examples of Wigner functions. The top one is positive everywhere and Gaussian. The lower one has negative values in places and is non-Gaussian. }
\label{fig:GPSG}
\end{figure}

A more general approach to describe states and operators $\hat{\Omega}$ as phase space functions $\mathcal{O}(x,p)$ is via \textit{Weyl transformation}:
\begin{equation}
\label{eq:weyl}
 \mathcal{O}(x,p) = \frac{1}{\pi \hbar} \int e^{2ipy/\hbar} \braket{x+y|\hat{\Omega}|x-y} \,dy,
\end{equation}
and the inverse transformation, \textit{Wigner transformation}, is given by the formula:
\begin{multline}
\label{eq:wigner}
 \hat{\Omega} = \frac{1}{(2\pi)^2} \iiiint \mathcal{O}(x,p) e^{i[\alpha(\hat{X}-x) + \beta(\hat{P} - p)]}\,d\alpha\,d\beta\,dx\,dp.
\end{multline}
It is clear, from \eqref{eq:wig-pure} and \eqref{eq:weyl}, the Wigner function of a pure state is the Weyl transformation of a pure density operator $\ket{\psi} \bra{\psi}$. With the fact that a general density operator represented by $\hat{\rho} = \sum_i \lambda_i \ket{\psi_i} \bra{\psi_i}$, any mixed state in the phase space representation is described by a convex sum of pure Wigner functions $W(x,p) = \sum_i \lambda_i W_i(x,p)$, where $W_i(x,p)$ are the pure Wigner functions and $\sum_i \lambda_i = 1$. 

Another point to note about the transition between Hilbert space formulation to phase space formulation of quantum mechanics is the preservation of non-commutative nature of operators via the introduction of \textit{star product} $\star$,
\begin{equation}
\label{eq:star}
 \star \equiv \exp \left[ \frac{i\hbar}{2} (\vec{\partial}_x \cev{\partial}_p - \vec{\partial}_p \cev{\partial}_x) \right],
\end{equation}
such that the mapping can be written as
\begin{equation}
 \hat{\Omega}_1 \hat{\Omega}_2 \rightarrow \mathcal{O}_1 (x,p) \star \mathcal{O}_2 (x,p),
\end{equation}
where $\mathcal{O}_i(x,p)$ is the Weyl transformation of the corresponding operator $\hat{\Omega}_i$.

In the phase space formulation, the probability of a measurement outcome $P(\Omega = \omega)$ corresponding to $\Tr (\ketbra{\omega}{\omega} \ketbra{\psi}{\psi})$ (where $\Omega \ket{\omega} = \omega \ket{\omega}$) for a quantum state $\ket{\psi}$ and a corresponding Wigner function $W_\psi(x,p)$ is 
\begin{equation}
P(\Omega = \omega) = 2\pi\hbar \iint W_{\omega} (x,p) W_{\psi} (x,p)\,dx\,dp,
\end{equation}
where $W_{\omega} (x,p)$ is the Wigner function of the state $\ket{\omega}$. In other words, this probability is simply the integral of the product of Wigner functions corresponding to the measurement outcome and the quantum state itself up to a normalisation constant $2\pi\hbar$. This result is analogous to its classical counterpart described by a distribution function $f(x,p)$, where the probability of a dynamic variable $\Omega = \omega$ is given by 
\begin{equation}
 P(\Omega = \omega) = \iint \delta[\Omega(x, p) - \omega] f(x,p)\,dx\,dp.
,\end{equation}
 It should be clear that the probability corresponding to a measurement of a dynamic variable $\Omega = \omega$ is the inner product of the state of the system $f(x,p)$, and a function that describes the measurement of probability of certain dynamic variable, which is known as an \textit{effect} $\mathcal{E}$. For instance, in the examples above, the effects for calculating the probability $P(\Omega = \omega)$, $\mathcal{E}_{\Omega = \omega}(x, p)$ are $\delta[\Omega(x,p) - \omega]$ and $2 \pi \hbar W_{\omega}(x,p)$ for classical and quantum cases respectively.

From the brief discussions above, it is shown that the mathematical structures of classical and quantum theories in phase space are similar to one another. Such similarities are anticipated by the fact that classical probability theories and finite-dimensional quantum theories can be described by a unified framework known as the \textbf{generalised probabilistic theories} (GPTs) \cite{GPTJ}.

A finite-dimensional probabilistic theory under the framework of GPTs has three major components:
\\
\\
\textbf{1. Preparation.} A state of a finite-dimensional system is represented by a real vector from a \textit{state space}, which is defined by the convex set of extremal states, or \textit{pure states}, and the \textit{null state}, within a finite-dimensional vector space. Such mathematical object completely determines the characteristics of the state of a physical system.
\\
\\
\textbf{2. Transformation}. The evolution of any state vector is represented by a linear map that maps any state vector into another within the state space.
\\
\\
\textbf{3. Measurement}. The probability of a particular measurement outcome for any state is described by the inner product between the state vector and an \textit{effect vector} corresponding to such measurement. The inner product calculated must be ranging between 0 and 1 as a valid probability \cite{GPTH}. The set of valid effects is known as the \textit{effect space}, which is constructed as the convex sum of the set of \textit{pure effects} and \textit{null effect}.

Under the GPT framework, it is possible to construct probabilistic theories other than classical and quantum theories by varying its state space, the effect space and the set of allowed transformations, hence providing a way of generalising existing theories into \textit{post-quantum theories}. It was shown via these constructed post-quantum theories that certain quantum phenomena, such as non-unique decomposition of mixed states into pure states, and no-cloning theorem \cite{GPTB}, which were thought to be novel to quantum systems, were actually generic properties of generalised probabilistic theories. However, these studies are limited to finite-dimensional systems and therefore cannot be immediately applied to the study of quantum tunnelling, which is a wave-mechanical quantum phenomenon. Therefore if it is possible to rewrite infinite-dimensional quantum mechanics in the framework of GPTs, we could determine whether tunnelling is unique to quantum theories, and examine such phenomenon in post-quantum theories.

\section{Results}
\subsection{ Our definition of tunnelling}
 It is clear from the previous section that \text{Def. \ref{thm:SD}} is too restrictive to describe the entire class of tunnelling behaviour.  A general quantitative definition of tunnelling is therefore required to be: 
 
 \begin{enumerate}
 	\item Mathematically well-defined; 
 	\item Able to recover \text{Def. \ref{thm:SD}} as a special case;
 	\item Able to reflect the non-classicality of the behaviour;
 	\item Able to supply a quantitative criteria of tunnelling for systems with general potentials and states without definite energy.
 \end{enumerate}

In order to construct such a definition of tunnelling, we starts with the law of conservation of energy for a classical particle in a potential $V(x)$:

\begin{enumerate}
\item Classical kinetic energy $p^2/2m$ is positive, due to the reality of momentum $p$. By conservation of energy, for a particle with energy $E^*$,
\begin{align}
 E^* - V(x) &= \frac{p^2}{2m} \geq 0 \nonumber \\
 E^* &\geq V(x).
\end{align}
This implies that a particle with energy $E^*$ is not allowed in the region $\{x | V(x)>E^*\}$ classically, where $x$ is a real variable representing position. Denote this region as $\mathcal{X}(E^*)$, as the classically forbidden region for states with energy $E^*$.

\item For two energies $E^*_1$ and $E^*_2$, such that $E^*_1 > E^*_2$, their respective classically forbidden regions satisfy $\mathcal{X}(E^*_1) \subset \mathcal{X}(E^*_2)$, since
\begin{align*}
 \forall x^*: x^* \in \mathcal{X}(E^*_1) &\leftrightarrow x^* \in \{x|V(x) > E^*_1\} \\
 & \leftrightarrow V(x^*) > E^*_1 \\
 & \rightarrow V(x^*) > E^*_2 \\
 & \leftrightarrow x^* \in \{x|V(x) > E^*_2\} \\
 & \leftrightarrow x^* \in \mathcal{X}(E^*_2) \\
 \therefore \mathcal{X}(E^*_1) &\subset \mathcal{X}(E^*_2).
\end{align*}
\item Classically, only a particle with energy $E> E^*$ is allowed to be in $\mathcal{X}(E^*)$, since:

\begin{enumerate}
 \item For $E = E^*$, $\mathcal{X}(E^*)$ is classically forbidden region; 
 \item For $E < E^*$, by (2), $\mathcal{X}(E^*) \subset \mathcal{X}(E)$ is also classically forbidden.
\end{enumerate}
\end{enumerate}

With the concluding statement, and with the general principle that tunnelling is a phenomenon that violates this classical constraint, we formulate the definition of tunnelling to be:

\begin{thmdef}
\label{def:QM-GD}
 \textbf{General Definition of Tunnelling.} For a state in a potential given by $V(x)$, it is tunnelling if and only if there exists some energy $E^*$, such that the probability of locating the state in region where $V(x) > E^*$ is greater than that of measuring the state to have energy $E > E^*$, or mathematically,
 \begin{equation*}
  \exists E^*: P(x | V(x) > E^*) > P(E > E^*).
 \end{equation*}
\end{thmdef}

Notice that this definition does not require the state in question to be quantum or classical, and therefore allows the definition to be applied as a condition of general phenomenon on any physical systems, provided they can be described by energy and position as physical parameters. Another advantage of using such general definition is its capabilities of extending to other non-classical behaviour, such as reflection over barrier as discussed in Appendix A. It is also shown, in Appendix B, that the conventional definition \text{Def. \ref{thm:SD}} could be recovered as a special case of such general definition.

\subsection{Main theorem:\\ Tunnelling necessitates negative Wigner function }
The necessary and sufficient condition for tunnelling, or equivalently the general definition \text{Def. \ref{def:QM-GD}}, are based on cumulative probabilities $P(E > E^*)$ and $P(x|V(x) > E^*)$. In the phase space formulation, they are given by the inner product of the functions representing the difference of the effects corresponding to the two probabilities, and the Wigner function representing the state. The general definition of tunnelling applied in this picture implies:

\begin{thm}
\label{thm:PSTNS}
\textbf{Necessary and Sufficient Conditions of Tunnelling in Phase Space.} A state represented by a distribution function $f(x,p)$ in phase space is tunnelling if and only if there exists some $E^*$:
\begin{multline*}
 \iint \left[\mathcal{E}_{\{x|V(x) > E^*\}}-\mathcal{E}_{E > E^*}\right](x,p) f(x,p)\,dx\,dp > 0,
\end{multline*}
where $\mathcal{E}_{\{x|V(x) > E^*\}}$ and $\mathcal{E}_{E > E^*}$ are the effects corresponding to the probability measurement of outcome $P(x | V(x) > E^*)$ and $P(E > E^*)$ respectively.
\end{thm}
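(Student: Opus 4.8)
The plan is to translate the General Definition of Tunnelling (Def.~\ref{def:QM-GD}) directly into phase space, so that the entire statement reduces to writing two cumulative probabilities as phase-space inner products and then invoking linearity of the integral. The content of that definition is the existence of some $E^*$ with $P(x|V(x)>E^*) > P(E>E^*)$, which is plainly equivalent to $P(x|V(x)>E^*) - P(E>E^*) > 0$. Everything therefore hinges on expressing each of these two probabilities as the integral of the state $f(x,p)$ against a suitable effect function; once that is done, the subtraction and the ``there exists $E^*$'' quantifier carry over verbatim and the two statements coincide.

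First I would fix an arbitrary $E^*$ and identify the two effects. For the position probability the relevant measurement operator is the projector onto the classically forbidden region, $\hat{\Pi}_{\{x|V(x)>E^*\}} = \int_{\{x|V(x)>E^*\}} \ketbra{x}{x}\,dx$; feeding this into the Weyl transform \eqref{eq:weyl} and carrying out the $y$-integral against the two resulting position delta functions, I expect the effect (the Weyl transform scaled by the conventional factor $2\pi\hbar$) to collapse to the indicator function $\mathbbm{1}_{\{x|V(x)>E^*\}}(x)$, independent of $p$. For the energy probability the relevant operator is the projector onto the part of the spectrum above $E^*$, $\hat{\Pi}_{E>E^*} = \int_{E>E^*} \ketbra{E}{E}\,dE$ (a sum over the discrete part of the spectrum where appropriate), whose Weyl transform is the effect $\mathcal{E}_{E>E^*}(x,p)$. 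By the phase-space Born rule stated in the introduction, each probability then equals the inner product of the corresponding effect with $f$, namely $P(x|V(x)>E^*) = \iint \mathcal{E}_{\{x|V(x)>E^*\}}(x,p)\,f(x,p)\,dx\,dp$ and $P(E>E^*) = \iint \mathcal{E}_{E>E^*}(x,p)\,f(x,p)\,dx\,dp$.

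With both probabilities written this way, linearity of the integral closes the argument: the inequality $P(x|V(x)>E^*) - P(E>E^*) > 0$ is identical to $\iint [\mathcal{E}_{\{x|V(x)>E^*\}} - \mathcal{E}_{E>E^*}](x,p)\,f(x,p)\,dx\,dp > 0$, and since both Def.~\ref{def:QM-GD} and the theorem quantify over the existence of such an $E^*$, the equivalence is immediate. I expect the main obstacle to be the identification step rather than the linearity step: one must justify that these projectors are the correct measurement operators, that their Weyl transforms are legitimate effect functions on phase space (the energy projector in particular is governed by the full Hamiltonian spectrum and will generally lack a closed form), and that the formal manipulation of the delta functions and the spectral sum is valid. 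Once that representation is secured, the remainder is the purely algebraic linearity argument above.
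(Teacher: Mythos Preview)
Your proposal is correct and follows essentially the same approach as the paper: write each cumulative probability as the phase-space inner product of $f$ with the corresponding effect, subtract, and carry over the existential quantifier on $E^*$ by linearity of the integral. The only difference is that you explicitly construct $\mathcal{E}_{\{x|V(x)>E^*\}}$ and $\mathcal{E}_{E>E^*}$ as Weyl transforms of spectral projectors, whereas the paper simply takes these effects as given by their defining property (that their inner product with $f$ returns the associated probability) and defers the explicit identification to the subsequent discussion; this extra justification is welcome but not strictly needed for the theorem as stated.
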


\begin{proof}
This statement is basically a rewritten form of the general definition. As by the definition of the effects $\mathcal{E}$ as stated in the theorem, 
\begin{multline*}
 P(x | V(x) > E^*) = \iint \mathcal{E}_{\{x|V(x) > E^*\}}(x,p) f(x,p) \,dx \,dp; \\
 P(E > E^*) = \iint \mathcal{E}_{E > E^*}(x,p) f(x,p)\,dx\,dp,
\end{multline*}
by \text{Def. \ref{def:QM-GD}}, the necessary and sufficient condition of tunnelling is therefore $\exists E^*$:
\begin{multline*}
 \iint [\mathcal{E}_{\{x|V(x) > E^*\}}
 -\mathcal{E}_{E > E^*}](x,p) f(x,p)\,dx\,dp > 0,
\end{multline*}
which is the theorem to be proved. 
\end{proof}
%


The theorem therefore shows that tunnelling is related to the distribution function and $\mathcal{E}_{E>E^*} - \mathcal{E}_{\{x|V(x)>E^*\}}$, and the latter is denoted as the \textit{tunnelling rate operator} at $E^*$. Two important consequences of \text{Thm. \ref{thm:PSTNS}} are:

\begin{cor}
\label{col:PSTN}
A state is non-tunnelling if, for every energy $E^*$, the tunnelling rate operator, and the distribution function are non-negative everywhere in the phase space.
\end{cor}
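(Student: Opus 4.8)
The plan is to derive the corollary directly from the contrapositive of Theorem \ref{thm:PSTNS}. First I would restate non-tunnelling in integral form: by Theorem \ref{thm:PSTNS}, the state fails to tunnel precisely when no $E^*$ makes the tunnelling integral strictly positive, i.e. when
\begin{equation*}
\iint \left[\mathcal{E}_{\{x|V(x)>E^*\}} - \mathcal{E}_{E>E^*}\right](x,p)\, f(x,p)\,dx\,dp \leq 0 \quad \text{for every } E^*.
\end{equation*}

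Next I would rewrite this in terms of the tunnelling rate operator $T_{E^*} := \mathcal{E}_{E>E^*} - \mathcal{E}_{\{x|V(x)>E^*\}}$, whose sign is opposite to the integrand above. The non-tunnelling condition then reads $\iint T_{E^*}(x,p)\, f(x,p)\,dx\,dp \geq 0$ for every $E^*$. This sign flip is the one bookkeeping point I would watch, since mismatching it would invert the whole statement, turning a sufficient condition for non-tunnelling into its opposite.

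The core step is then elementary. Under the hypothesis that for every $E^*$ both $T_{E^*}(x,p) \geq 0$ and $f(x,p) \geq 0$ hold at every phase-space point, the pointwise product $T_{E^*}(x,p)\, f(x,p)$ is non-negative everywhere, so its double integral over phase space is non-negative. Since this holds for every choice of $E^*$, the non-tunnelling condition established in the first step is satisfied, and the corollary follows.

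The main obstacle, such as it is, is logical rather than analytic: I would need to respect the quantifier structure, namely that the hypothesis must hold \emph{for all} $E^*$ in order for the conclusion to follow for all $E^*$, and to confirm that the implication ``non-negative integrand yields non-negative integral'' is applied legitimately, i.e. that the effect functions and the distribution function $f$ are measurable and integrable over phase space so the double integral is well defined. Both are guaranteed by the phase-space framework set up earlier, so no genuine difficulty arises; the substantive content of the corollary is simply that joint positivity of the distribution function of the state and of the tunnelling rate operator at every energy $E^*$ is sufficient to forbid tunnelling.
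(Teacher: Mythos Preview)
Your proposal is correct and follows essentially the same route as the paper: negate the tunnelling condition from Theorem~\ref{thm:PSTNS}, flip the sign to express it via the tunnelling rate operator $\mathcal{E}_{E>E^*}-\mathcal{E}_{\{x|V(x)>E^*\}}$, and observe that a pointwise non-negative product integrates to something non-negative. Your added remarks on the quantifier structure and on integrability are sound but go slightly beyond what the paper spells out; the core argument is identical.
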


\begin{proof}
An equivalent way of expressing the necessary and sufficient conditions of tunnelling is to specify the necessary and sufficient conditions of a non-tunnelling state, which is for all $E^*$: 
\begin{align*}
 \iint \left[\mathcal{E}_{\{x|V(x) > E^*\}}-\mathcal{E}_{E > E^*}\right] f(x,p) \,dx \,dp &\leq 0 \\
 \iint \left[\mathcal{E}_{E > E^*} - \mathcal{E}_{\{x|V(x)>E^*\}}\right] f(x,p) \,dx \,dp &\geq 0.
\end{align*}
It is clear that if both $\mathcal{E}_{E > E^*} - \mathcal{E}_{\{x|V(x)>E^*\}}$ and $f(x,p)$ is non-negative over all of phase space, then this condition will hold. 
\end{proof}

Notice here this theorem only provides a sufficient but not necessary condition for the case of non-tunnelling, as it is not the case where $\mathcal{E}_{E > E^*} - \mathcal{E}_{\{x|V(x) > E^*\}}$ and $f(x,p)$ must be positive for the non-tunnelling condition to be satisfied. Another way of rephrasing the corollary, or indeed as a direct consequence of the main theorem, is:

\begin{cor}
\label{col:PST}
If a state is tunnelling, then either tunnelling rate operator at some $E^*$, or the distribution function, contain negativities.
\end{cor}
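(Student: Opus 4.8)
The plan is to prove this corollary simply as the logical contrapositive of Corollary~\ref{col:PSTN}. First I would recall that Corollary~\ref{col:PSTN} asserts the implication: if, for every $E^*$, the tunnelling rate operator $\mathcal{E}_{E > E^*} - \mathcal{E}_{\{x|V(x)>E^*\}}$ is non-negative everywhere in phase space \emph{and} the distribution function $f(x,p)$ is non-negative everywhere, then the state is non-tunnelling. Schematically, letting $C$ denote the conjunction ``(tunnelling rate operator $\geq 0$ everywhere for all $E^*$) and ($f \geq 0$ everywhere)'', Corollary~\ref{col:PSTN} reads $C \Rightarrow \text{non-tunnelling}$.

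Next I would take the contrapositive of this implication, namely $\text{tunnelling} \Rightarrow \neg C$. Applying De Morgan's law to $\neg C$ splits the negated conjunction into a disjunction: $\neg C$ holds precisely when either the distribution function $f(x,p)$ takes a negative value at some phase space point, or there exists an energy $E^*$ at which the tunnelling rate operator takes a negative value at some phase space point. This disjunction is exactly the conclusion claimed in the corollary, so the statement follows at once. Note that the quantifier structure works out cleanly because $f(x,p)$ carries no dependence on $E^*$, so the negation of ``for all $E^*$'' attaches only to the tunnelling rate operator term.

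The one point requiring care is that Corollary~\ref{col:PSTN} gives a condition that is sufficient but not necessary for non-tunnelling, as noted in the remark following it. It would be tempting to worry that this weakens the present corollary, but it does not: taking the contrapositive of a valid sufficient-condition implication always yields a valid necessary-condition implication, and it is precisely the necessary direction we invoke here. The non-necessity of $C$ for non-tunnelling tells us only that the converse, $\neg C \Rightarrow \text{tunnelling}$, fails; that converse is not what we claim. Since there is no analytic content beyond this logical manipulation---no integral to evaluate and no property of the Weyl transform to exploit---I do not anticipate any genuine obstacle; the entire argument is a one-line application of the contrapositive to the preceding corollary.
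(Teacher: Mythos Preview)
Your proposal is correct and essentially matches the paper's approach. The paper's written proof goes directly from Theorem~\ref{thm:PSTNS}, reshuffling the defining inequality to $\iint [\mathcal{E}_{E > E^*} - \mathcal{E}_{\{x|V(x)>E^*\}}] f \,dx\,dp < 0$ and observing this cannot hold if both factors are everywhere non-negative; but the text immediately preceding the corollary already remarks that it is ``another way of rephrasing'' Corollary~\ref{col:PSTN}, which is exactly your contrapositive route, so the two arguments differ only in whether one invokes Corollary~\ref{col:PSTN} explicitly or reproves its content inline.
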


\begin{proof}
The result can be directly inferred from \text{Thm. \ref{thm:PSTNS}}, by a simple reshuffling of the inequality that defines tunnelling: for some energy $E^*$,
\begin{align*}
\iint [\mathcal{E}_{\{x|V(x) > E^*\}}-\mathcal{E}_{E > E^*}](x,p) f(x,p)\,dx\,dp > 0, \\
\iint [\mathcal{E}_{E > E^*} -\mathcal{E}_{\{x|V(x) > E^*\}}](x,p) f(x,p)\,dx\,dp < 0.
\end{align*}

Since it is impossible for the statement above to hold if any of the two functions are positive over the entire phase space, then either $\mathcal{E}_{E > E^*} - \mathcal{E}_{\{x|V(x)>E^*\}}$ or $f(x,p)$ must contain negativities.
\end{proof}

The last statement demonstrates the main claim of this paper: under a phase space framework, tunnelling implies negativities in the distribution representing the state, and/or the tunnelling rate operator at some energy levels. In the case of quantum systems, tunnelling implies negativities in the Wigner function of the state and/or the tunnelling rate operator.

\section{Discussion}
\subsection{Quantum vs. Classical case}

The results of the previous section could immediately be applied to the study of generic behaviours of classical and quantum theories. One important result is that classical phase space mechanics do not allow tunnelling, because as shown in Appendix C, both the phase space distributions representing a state and the tunnelling rate operator does not contain non-negativities, and by \text{Cor. \ref{col:PSTN}}, classical states cannot tunnel. However, in quantum theories, the non-negativity conditions of the functions are relaxed. This can be attributed to the following two elements of quantum phase space theory:
\\
\\
\textbf{1. Wigner Function as Quasi-Probability Distribution.} A quantum state in phase space is represented by Wigner function, which generally contains negative values in general. 
 \\
 \\
\textbf{2. Deformation of Effect }$\mathbf{\mathcal{E}_{E > E^*}.}$ Generally speaking, the effect $\mathcal{E}_{\{x | V(x) > E^*\}}$ is identical for both classical and quantum case, while the effect $\mathcal{E}_{E > E^*}$ is altered in the quantum case. Both phenomena can be explained by the alteration in the eigenvalue equations for both position and Hamiltonian operator represented in the phase space picture, as will be shown here.
 
 Generally, for any dynamical variables $\Omega(x,p)$, for both classical and quantum systems, their eigenstate with respect to the corresponding algebra has definite value $\omega$ if one conducts a measurement of $\Omega$ on such system. 
 
 Consider first the dynamical variable $x$ as the position. The eigenvalue equation in classical systems with eigenstate $f_0(x,p)$ and eigenvalue $x_0$ is 
 \begin{equation}
  x f_0(x,p) = x_0 f_0(x,p),
 \end{equation}
 and this equation has an obvious solution of $f_0(x,p) = \delta(x - x_0) g(p)$, where $g$ is an arbitrary continuous positive real function which is normalised to unity $\int g(p)\,dp = 1$. If the state in question is of such a form, the measurement of position on such state must give $x = x_0$. At the same time, such function could be used as an effect to determine what is the probability of a certain state to have a position measurement of $x = x_0$, as the inner product of this delta function against any state would give the proportion of the state with $x = x_0$, i.e. the corresponding probability. Therefore, classically, $\mathcal{E}_{x = x_0} = \delta(x - x_0)$. 
 
 In the case of quantum systems, the product of any operators replaced by star-product of functions as a result of deformation quantisation. Therefore, the eigenvalue problem is mapped to an equation with position eigenstate $W_0(x,p)$ and eigenvalue $x_0$:
 \begin{equation}
  x \star W_0(x,p) = x_0 W_0(x,p).
 \end{equation}
 One simple way of solving the problem is to switch back to the Hilbert space picture, where the equivalent problem is 
 \begin{equation}
  \braket{x|\hat{X}|\psi_0} = x_0 \braket{x|\psi_0},
 \end{equation}
 which has a well-known solution of $\braket{x|\psi_0} = \psi_0(x) = \delta(x-x_0)$. The Wigner function for such a state is given by
 \begin{align}
  &\, \frac{1}{\pi\hbar} \int e^{2ipy/\hbar} \delta(x + y - x_0) \delta(x - y - x_0) \,dy \sim \,\delta(x-x_0)
 \end{align}
 which is again of the form of a Dirac delta function in position space after integrating away the momentum dependence. By the duality of effect and state in quantum theory, the effect for $x = x_0$ is therefore $\mathcal{E}_{x = x_0} = \delta(x-x_0)$. 
 
 Therefore, since both the classical and quantum position eigenstates are of the form of a Dirac delta function $\delta(x-x_0)$, the effect corresponding to measurement in position space are therefore identical, which implies $\mathcal{E}_{\{x | V(x) > E^*\}}$ is of the same form for both classical and quantum calculations.
 
 However, this is not the case for $\mathcal{E}_{E > E^*}$. The energy eigenstate $f_E(x,p)$ with energy $E$ for a classical system satisfies 
 \begin{equation}
  H(x,p) f_E(x,p) = E f_E(x,p),
 \end{equation}
 which gives $f_E(x,p) \sim \delta\left[H(x,p)-E\right]$. The quantum energy eigenvalue equation in phase space for an energy eigenstate $W_E(x,p)$ and energy $E$, however, is altered to be
 \begin{equation}
  H(x,p) \star W_E(x,p) = E W_E(x,p).
 \end{equation}
 Here, the energy eigenstate as a Weyl map of $\hat{\rho} = \ket{E}\bra{E}$ is no longer of the form $\delta \left[H(x,p) - E\right]$, as the introduction of star product implies the functional dependence of energy eigenstate is no longer purely on the functional form of the dynamical variable. This can be seen by the Bopp shift representation of star product \eqref{eq:star}, where the functional dependence of the energy eigenstate depends also on the position and momentum derivatives of the Hamiltonian function. Therefore, the energy eigenstate $\mathcal{E}_{E > E^*}$ is not identical to its classical counterpart. 


 With the energy effect $\mathcal{E}_{E > E^*}(x,p) = 2\pi\hbar \int_{E^*}^\infty W_{E'}(x,p)\,dE'$, and the fact that Wigner functions are generally not completely positive, $\mathcal{E}_{E > E^*}(x,p)$ generally contains negativity. Therefore, the function $\mathcal{E}_{E > E^*} - \mathcal{E}_{\{x | V(x) > E^*\}}$ generally contains negativity, given that $\mathcal{E}_{\{x | V(x) > E^*\}}$ is non-negative. This demonstrates that how this operator can violate the classical case by containing negativities.

\subsection{Tunnelling in pure Gaussian states} A special class of states in quantum mechanics is Gaussian states, which has non-negative Wigner functions in phase space representation. By \text{Cor. \ref{col:PSTN}}, a statement for Gaussian states could be made as:

\begin{cor}
\label{cor:PTG}
 A Gaussian state can tunnel only if the function $\mathcal{E}_{E > E^*} - \mathcal{E}_{\{x | V(x) > E^*\}}$ contains negativities for some $E^*$. 
\end{cor}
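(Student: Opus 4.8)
The plan is to derive this corollary as an immediate specialisation of Corollary \ref{col:PST}, eliminating one of its two disjuncts. Recall that Corollary \ref{col:PST} asserts that whenever a state tunnels, at least one of two phase space functions must take negative values somewhere: the tunnelling rate operator $\mathcal{E}_{E > E^*} - \mathcal{E}_{\{x|V(x)>E^*\}}$ at some $E^*$, or the distribution function $f(x,p)$ representing the state. The statement to be proved is simply what this disjunction reduces to once we restrict attention to a Gaussian state, for which the second alternative can be ruled out directly.

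The key tool is Hudson's theorem, already invoked in the excerpt via \cite{PSH}: a pure continuous-variable state has a Wigner function that is non-negative everywhere in phase space if and only if the state is Gaussian. First I would use this to fix the distribution function of the state as $f(x,p) = W_\psi(x,p)$ with $W_\psi(x,p) \geq 0$ for all $(x,p)$. This removes the second disjunct of Corollary \ref{col:PST}, since the Wigner function of the state is then incapable of being the source of the required negativity. With that branch eliminated, the disjunction forces the remaining alternative: if the Gaussian state tunnels, the negativity guaranteed by Corollary \ref{col:PST} must reside in the tunnelling rate operator $\mathcal{E}_{E > E^*} - \mathcal{E}_{\{x|V(x)>E^*\}}$ for some $E^*$. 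Rephrasing this implication in the ``only if'' form stated in the corollary gives exactly the claim.

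I expect the entire substantive content to lie in the correct invocation of Hudson's theorem rather than in any computation, with the logical step being a routine disjunction elimination. The main subtlety to watch is that Hudson's theorem as classically stated concerns \emph{pure} states, whereas the word ``Gaussian state'' could in principle include mixed Gaussians. The cleanest resolution, consistent with the subsection heading ``Tunnelling in pure Gaussian states'', is to read the corollary as a statement about pure Gaussian states; alternatively, one notes that a mixed Gaussian state is a convex combination $W = \sum_i \lambda_i W_i$ of pure Gaussian Wigner functions, as recalled earlier in the excerpt, each of which is non-negative, so non-negativity of $f(x,p)$ is preserved and the conclusion extends verbatim. Either way, the hinge of the argument is the same non-negativity input, and no further obstacle arises.
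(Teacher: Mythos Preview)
Your proposal is correct and matches the paper's own proof essentially verbatim: both invoke the disjunction from the tunnelling corollaries (the paper cites Cor.~\ref{col:PSTN}, you cite its contrapositive Cor.~\ref{col:PST}, which is the same content) and then eliminate the state-negativity branch via Hudson's theorem. Your remark on extending to mixed Gaussians via convexity is a sensible addendum that the paper does not spell out, but it does not change the route.
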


\begin{proof}
By \text{Cor. \ref{col:PSTN}}, if a state is tunnelling, then either $\mathcal{E}_{E > E^*} - \mathcal{E}_{\{x | V(x) > E^*\}}$ or $W(x,p)$ contains negativities for some energy $E^*$. By Hudson's theorem \cite{PSH}, the Wigner function representing Gaussian state is non-negative over all of phase space. Therefore, if a Gaussian state is tunnelling, then $\mathcal{E}_{E > E^*} - \mathcal{E}_{\{x | V(x) > E^*\}}$ must contain negativities for some energy $E^*$.
\end{proof}

This corollary serves two purposes. First of all, since a Gaussian state is represented by a positive Wigner function over phase space, it is often considered to be a valid joint-probability distribution and as the `least non-classical' state \cite{PSMK}. This corollary serves as a reminder that despite it is true that negativity in Wigner functions as distribution functions is a novel feature in phase space quantum theory, a completely positive Wigner function can still exhibit non-classical behaviours, which, in this specific case, is due to the deformation in effects as discussed in the last section. It has been demonstrated that the sub-theory of Gaussian quantum mechanics can be constructed by imposing certain epistemic restriction on classical phase space mechanics \cite{PSBR}, in which only Gaussian states, measurements and operations are considered. This reinforces the conclusion that the tunnelling rate operator corresponding to the measurements in question is the culprit for a Gaussian state to exhibit tunnelling, for otherwise, the system would simply be classical and cannot tunnel. 


Secondly, the discussion of tunnelling in Gaussian states allows for simplified examples of the application of the previous results, as the tunnelling rate operator $\mathcal{E}_{E > E^*} - \mathcal{E}_{\{x | V(x) > E^*\}}$ determines the tunnelling behaviour of these Gaussian states. Consider the following two examples of Gaussian states:
\\
\\
\textbf{1. Quantum Tunnelling of Wave Packets.}
A particle in quantum mechanics is often described by a wave packet, which has a localised waveform as a superposition of energy eigenstates \cite{QMC}. In particular, Gaussian wave packets are often used in the study of quantum tunnelling and general quantum mechanics problem \cite{QTH}. By studying Gaussian wave packets in rectangular potential barrier, it also introduces dynamical aspects into the problem of tunnelling, as the position probability distribution of a quantum system now changes over time due to the relative phase differences between the different components of its energy eigenstates. 

A Gaussian wave packet that centres at position $x = x_0$ and momentum $p = p_0$, with uncertainty in position as $\Delta x = \sigma_x$, has the form
\begin{equation}
\label{eq:QMGWP}
 \psi(x) = \left(\frac{1}{2\pi\sigma_x^2}\right)^{1/4} \exp{\left[\minus{\frac{(x-x_0)^2}{4\sigma_x^2}}\right]} \exp{\left(\frac{ip_0 x}{\hbar}\right)},
\end{equation}
which has a Gaussian distribution over position space,
\begin{equation}
 P(x) = \sqrt{\frac{1}{2\pi\sigma_x^2}} \exp{\left[\minus{\frac{(x-x_0)^2}{2\sigma_x^2}}\right]}.
\end{equation}

\begin{figure}[!htb]
  \centering
  \includegraphics[scale=0.4]{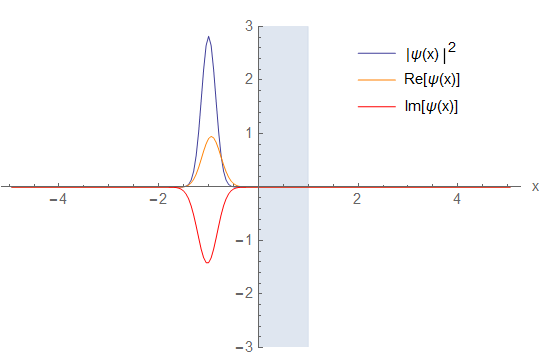}
  \includegraphics[scale=0.4]{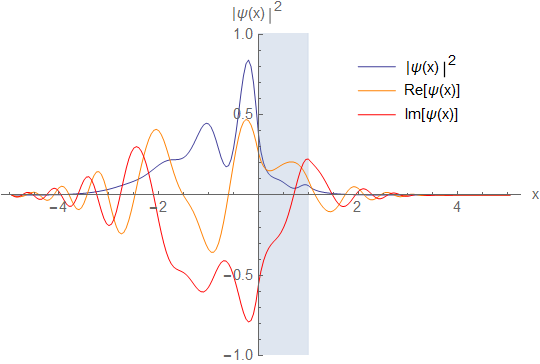}
  \caption{The probability distribution of position of a Gaussian wave packet before (left) and after (right) `passing' the barrier. \label{fig:QMRPBG}}
\end{figure}

It is possible to solve for the dynamics of the Gaussian wave packet as the superposition of time-dependent energy eigenstates of the rectangular potential barrier, or via numerical simulation of Schr\"{o}dinger equation. An example of such a simulation is shown in \text{Fig. \ref{fig:QMRPBG}}, where a wave packet, with initial Gaussian shape and average energy lower than the potential height, passes through the barrier, and the Gaussian nature of the wave packet is destroyed.

\begin{figure}[!htb]
	\centering
	\includegraphics[scale=0.4]{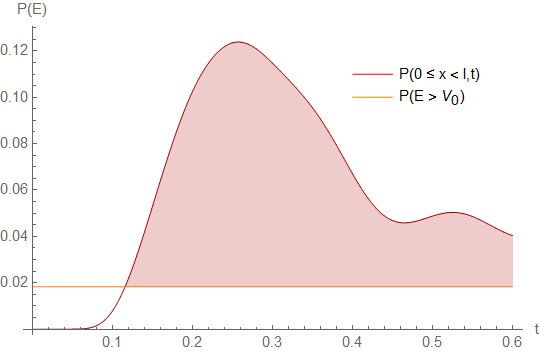}
	\caption{Plot of probabilities $P(0\leq x < l)$ (red) and $P(E > V_0)$ (orange) for a Gaussian wave packet tunnelling through a barrier as a function of time $t$. At times that $P(0\leq x < l, t) \geq P(E>V_0)$, the state is considered to be tunnelling. \label{fig:GWPT}}
\end{figure}

The fact that there is no definite energy for such Gaussian wave packet creates difficulties in applying the standard definition of tunnelling to this case. However, using our general definition, it could be demonstrated that such a state is indeed tunnelling at certain times during the propagation of the wave packet, as $\exists E^*: P(E>E^*) > P(x|V(x) > E^*)$ as shown in \text{Fig. \ref{fig:GWPT}}. The detailed calculations could be found in Appendix D.
\\
\\
\textbf{2. Simple Harmonic Oscillator.}
A simple harmonic oscillator has a potential of the form
\begin{equation}
 V(x) = \frac{1}{2} m \omega^2 x^2,
\end{equation}
where $m$ is the mass of the particle, and $\omega$ is the \textit{angular frequency} of the oscillator. A simple harmonic oscillator is one of the most well-studied potentials in physics, and has many nice features that are exploited in both classical and quantum theories. 

A particle as a classical simple harmonic oscillator will carry out sinusoidal oscillatory motion with \cite{CMK}
\begin{equation}
\label{eq:QMCHO}
 x(t) = C \cos{(\omega t + \phi)},
\end{equation}
where the amplitude $C$ and the phase $\phi$ depends on the initial conditions of the particle. By conservation of energy, a particle with energy $E$ is only allowed to be in the region $-\sqrt{2E/m\omega^2} \leq x \leq \sqrt{2E/m\omega^2}$, for otherwise the particle would have negative kinetic energy.

In the quantum case, in order to solve for the energy eigenstates of the Hamiltonian 
\begin{equation}
\label{eq:QMQHOH}
 \hat{H} = \frac{\hat{P}^2}{2m} + \frac{1}{2} m \omega^2 \hat{X}^2,
\end{equation}
one can use the \textit{ladder operator} method by defining the \textit{creation operator} $\hat{a}^\dagger$ and \textit{annihilation operator} $\hat{a}$ as \cite{QMC}\cite{QMG}\cite{QMS}
\begin{align}
 \hat{a} = \sqrt{\frac{m \omega}{2 \hbar}} \left( \hat{X} + \frac{i}{m \omega} \hat{P} \right),\\
 \hat{a}^\dagger = \sqrt{\frac{m \omega}{2 \hbar}} \left( \hat{X} - \frac{i}{m \omega} \hat{P} \right),
\end{align}
which gives a commutation relation $[\hat{a},\hat{a}^\dagger] = 1$. With the two operators, the Hamiltonian can be rewritten as 
\begin{equation}
 \hat{H} = \hbar \omega (\hat{a}^\dagger \hat{a} + \frac{1}{2}).
\end{equation}
One can identify the operators $\hat{a}^\dagger \hat{a}$ as the \textit{number operator} $\hat{n}$, and label the energy eigenstates as $\ket{n}$ with energy
\begin{equation}
 E_n = \left(n + \frac{1}{2} \right) \hbar \omega,
\end{equation}
and the operations of the annihilation and creation operators on the energy eigenstates are
\begin{align}
 \label{eq:QMQHOAn}
 \hat{a}\ket{n} &= \sqrt{n} \ket{n-1} \\
 \label{eq:QMQHOCn}
 \hat{a}^\dagger \ket{n} &= \sqrt{n+1} \ket{n+1}.
\end{align}

\begin{figure}[!htb]
  \centering
  \includegraphics[scale=0.4]{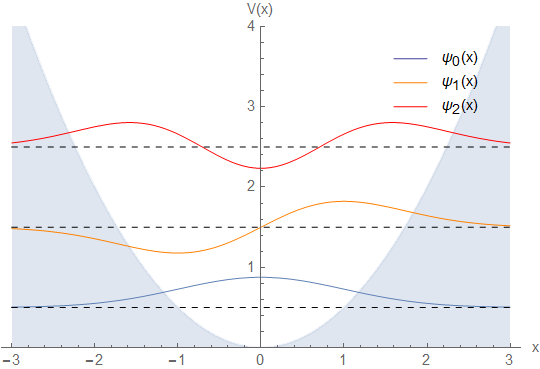}
  \includegraphics[scale=0.4]{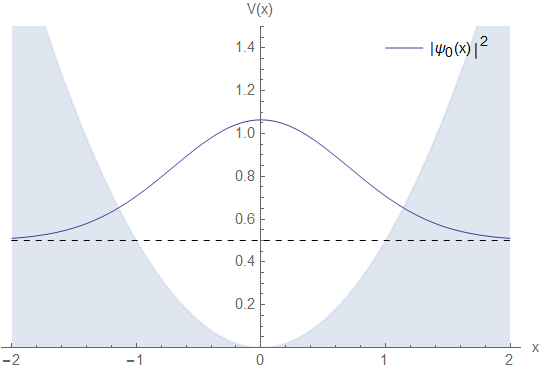}
  \caption{First three eigenstates of quantum harmonic oscillator (left) and probability of locating the ground state of quantum harmonic oscillator (right). The dashed line represent the zeros of the corresponding wave functions and probabilities, and the blue region specifies the classically forbidden region. \label{fig:QMQHO}}
\end{figure}

In this example, the wave function of interest is the ground state of quantum harmonic oscillator, $\psi_0(x) = \braket{x|0}$, which can be solved by the equation $\braket{x | \hat{a}|0} =0$. Solving the equation would give the solution: 
\begin{equation}
\label{eq:QMQHOGS}
 \psi_0(x) = \left(\frac{m\omega}{\pi\hbar}\right)^{1/4} \exp{\left(\minus{\frac{m \omega x^2}{2\hbar}}\right)},
\end{equation}
which has a Gaussian waveform. Since this energy eigenstate has energy $E_0 = \hbar \omega / 2$, and the Gaussian wave function has non-zero amplitudes over the entire position space, there is non-zero probability of finding the ground state in classically forbidden region where $V(x) > \hbar \omega / 2$, hence $P(x|V(x)>E_0) > P(E>E_0)$, satisfying the criteria of tunnelling. This phenomenon also applies to excited states, which can be clearly seen in \text{Fig. \ref{fig:QMQHO}}. 

To mathematically develop the statement about whether the ground state is indeed tunnelling from the viewpoint of the phase space, one could refer to \text{Cor. \ref{cor:PTG}} and consider the operator $\mathcal{E}_{\{E > E^*\}} - \mathcal{E}_{\{x|V(x)>E^*\}}$ at $E^* = E_0 = \hbar \omega / 2$. It can be shown that, in Appendix E, that such an operator for the quantum case is calculated to be 
\begin{align}
	&\,[\mathcal{E}_{E>E_0} - \mathcal{E}_{\{x|V(x) > E_0\}}](x,p) \nonumber \\
	=&\,
	\begin{cases}
		-2 \exp \left(-\frac{m\omega x^2}{\hbar} \right)\exp\left(-\frac{p^2}{m\omega\hbar}\right) & \text{for } V(x) > E_0, \\
		1 - 2\exp \left(-\frac{m\omega x^2}{\hbar}\right) \exp \left(-\frac{p^2}{m\omega\hbar}\right) & \text{otherwise,}
	\end{cases}
\end{align} 
which contains negativities, indicating the ground state of a quantum harmonic oscillator can tunnel. To demonstrate such a state is indeed tunnelling, one could calculate the inner product between this tunnelling rate operator and the Wigner function of the ground state, which is indeed negative as shown in Appendix E. By \text{Thm. \ref{def:QM-GD}}, the ground state of a quantum harmonic oscillator is a tunnelling state. 

\subsection{Wigner function as example of generalised probabilistic theory} In order to demonstrate that phase space quantum theory can be structured under the generalised probabilistic theory framework, it is necessary to recover the mathematical objects corresponding to the three components of a GPT:
\\
\\
\textbf{1. Preparation.} A state of a one-spatial dimensional quantum state can be described by a Wigner function $W(x,p)$, which is essentially the Weyl transformation of a density operator representing that quantum state in Hilbert space formulation. However, there may be difficulties in designating such state as a real vector, which is demanded by the framework of GPT, for mathematical manipulations may not be well-behaved for objects in infinite-dimensional spaces \cite{QMG}.
 
 Nonetheless, real functions $f(x,p)$ in the phase space can be considered as vectors in a vector space, with the vector addition as addition of functions and scalar multiplication as multiplying a scalar \cite{MPH}, in which the set of valid phase space functions, or the state space, is a subset. In addition, one can define an inner product on such vector space such that, for any two functions $\Omega_1$ and $\Omega_2$:
 \begin{equation}
  \braket{\Omega_1,\Omega_2} = \iint \Omega_1(x,p) \Omega^*_2(x,p)\,dx\,dp,
 \end{equation}
 such that a Wigner function corresponding to a pure state is square-integrable,
 \begin{equation}
  \iint |W(x,p)|^2 \,dx \,dp = \frac{1}{2\pi\hbar},
 \end{equation} 
 which implies that the function converges and the inner product exists for these states as a sign of well-behaved theory. Moreover, for general normalised states that is represented by $\hat{\rho} = \sum_i \lambda_i \ket{\psi_i}\bra{\psi_i}$ with $\sum_i \lambda_i = 1$ and $0 \leq \lambda_i \leq 1$, the Wigner representation is
 \begin{align}
  &\,\frac{1}{\pi \hbar} \int e^{2ipy/\hbar} \braket{x-y|\sum_i \lambda_i |\psi_i}\braket{\psi_i|x+y}\,dy \nonumber \\
  =&\, \sum_i \frac{\lambda_i}{\pi\hbar} \int e^{2ipy/\hbar} \psi_i^*(x+y) \psi_i(x-y)\,dy \nonumber \\
  =&\, \sum_i \lambda_i W_i(x,p),
 \end{align}
 where $W_i(x,p)$ are Wigner functions corresponding to the pure states $\ket{\psi_i}\bra{\psi_i}$. In this case, since $\sum_i \lambda_i = 1$, these Wigner functions corresponds to the normalised states, and therefore any normalised is a convex sum of the pure states $W_i$. Therefore, it is demonstrated that the Wigner functions corresponding to the pure states are the extremal states of the GPT, and the set of normalised states is the convex hull of such extremal states. Therefore, these normalised mixed states are square-integrable as well, as 
 \begin{align}
  &\,\iint W^2(x,p)\,dx\,dp \nonumber \\
  =&\,\iint \left[\sum_i \lambda_i W_i(x,p)\right]^2\,dx\,dp \nonumber \\
 =&\,\sum_{i,j} \lambda_i \lambda_j \iint W_i(x,p) W_j(x,p) \,dx \,dp \nonumber \\
 \leq&\, \sum_i  \frac{\lambda_i^2}{2\pi\hbar} + \sum_{i \neq j} \frac{\lambda_i\lambda_j}{2\pi\hbar} \nonumber \\
 =&\,\frac{1}{2 \pi \hbar} \left(\sum_i \lambda_i\right)^2 =\frac{1}{2 \pi \hbar},
 \end{align}
 which implies $1/2\pi\hbar$ is the maximum of the square-integral of any Wigner function. In fact, such result is related to the purity measure $\Tr(\hat{\rho}^2) = \mu$. This measure can be mapped to
 \begin{equation}
 \label{eq:GPSP}
  \Tr(\hat{\rho}^2) = 2 \pi \hbar \iint W^2(x,p)\,dx\,dp = \mu,
 \end{equation}
 where $0 \leq \mu \leq 1$, with $0 \leq 2\pi\hbar \iint W_i(x,p) W_j(x,p) \leq 1$. This implies the square integral, or the inner product of the state on itself, or geometrically the square of the length of the state vector, is linearly related to purity. An interesting comparison to be made here is that the length of a vector in Bloch sphere corresponds also to the purity measure of a qubit, and that in quantum theory of qubit, such length is also related to uncertainty of a state.
 
 Sub-normalised states can be generated by rescaling a normalised Wigner function with some non-negative constant $\nu \leq 1$, and therefore in general, any state can be expressed as  
 \begin{align}
  W(x,p) = \nu \sum_i \lambda_i W_i(x,p) = \sum_i \lambda_i [\nu W_i(x,p)],
 \end{align}
 as the convex sum of extremal states and the null vector. Therefore, the states of phase space formulation concur with the requirements of the GPT framework.
 
 There is, however, one caveat: the conventional Wigner representation of a position eigenstate is linearly related to a Dirac delta function $\delta(x-x_0)$. This state is not well-behaved in the sense that the square-integral of the function diverges, causing problems in defining the inner product of such vector space as the integral of two phase space functions. Nonetheless, it should be noted that the same problem arises in the Hilbert space formulation of quantum theory, where the wave function of a position eigenstate is not considered to be a state in Hilbert space and therefore does not correspond to a physical state \cite{QMG}. Therefore, the introduction of such states in the GPT framework does not lead to additional non-physical features other than the ones inherent in the conventional quantum theory \cite{QMG}.
 
 As a side remark, a possible method to bypass this problem by discretising the Wigner function into an $N \times N$ grid in phase space, in which each grid at $(x_i, p_j)$ represents the quasi-probability of locating the state between some range $x_i \leq x < x_i + \Delta x$ and $p_j \leq p < p_j+\Delta p$, i.e. $\int_{p_j}^{p_j + \Delta p}\int_{x_i}^{x_i+\Delta x} W(x,p) \,dx\,dp$. Even though it would require $N \rightarrow \infty$, it would regularise the Dirac delta functions $\delta(x-x_0)$ into having the value $1/N$ at each grid where $x_i \leq x_0< x_i + \Delta x$, instead of infinities. Since these quasi-probabilities are real numbers, at $N \rightarrow \infty$ a scaled Wigner function is recovered, the GPT formulation can be applied to such object as representation of a quantum state. Needless to say, this process would require a detailed formulation of discrete Wigner functions and phase space quantum mechanics.
 \\
 \\
 \textbf{2. Transformation.} The set of valid transformations is the set of functions with star-product as the Weyl transform of transformations in Hilbert space formulation. An example of such transformation is the unitary evolution of state. The Weyl transformation of such evolution is given by
 \begin{equation*}
  \hat{U} \rho \hat{U}^\dagger \rightarrow U(x,p) \star W(x,p) \star U^*(x,p),
 \end{equation*}
 where $U(x,p)$ is the Weyl transform of the unitary operator $\hat{U}$. It can be demonstrated that complex conjugate transposition of an operator in Hilbert space formulation is mapped to a complex conjugation of the phase space function, as 
 \begin{align*}
  &\,(U^\dagger)(x,p) \\
  =&\, 2\int_{\minus \infty}^\infty e^{2ipy/\hbar} \braket{x+y|U^\dagger|x-y}\,dy \\
  =&\, \minus 2\int_{\infty}^{\minus \infty} e^{\minus 2ipy/\hbar} \braket{x-y|U^\dagger|x+y}\,dy \\
  =&\, \left(2 \int_{\minus \infty}^\infty e^{2ipy/\hbar} \braket{x+y|U|x-y}\,dy\right)^* \\
  =&\, U^*(x,p).
 \end{align*}
 Therefore, the unitary condition for a reversible transformation is expressed as 
 \begin{equation*}
  \hat{U}^\dagger \hat{U} = \mathbb{1} \rightarrow U^*(x,p) \star U(x,p) = 1.
 \end{equation*}
 It is also interesting to note that unitary transformations preserve purity by
 \begin{align*}
  \Tr\left([\hat{U} \hat{\rho} \hat{U}^\dagger]^2\right) = \Tr(\hat{U}\hat{\rho}^2\hat{U}^\dagger) = \Tr(\hat{U}^\dagger \hat{U} \hat{\rho^2} = \Tr(\hat{\rho}^2).
 \end{align*}
 In the phase space formulation, this is equivalent to the case where the unitary transformations preserve the length of the state vector. In general,  unitary transformations preserve the inner products between two states, and therefore in the phase space picture such transformations resembles an orthogonal transformation, despite the dimension of the state space is infinite. This feature is analogous to the isomorphism between $SU(2)$ and $SO(3)$ in the Bloch sphere representation of quantum theory of qubits.
 
 In order to demonstrate that transformations in phase space formulation of quantum theory concurs with that of a GPT, it is necessary to demonstrate that such transformations are linear. Since it is the case in Hilbert space formulation that a transformation, represented by a CPTP map as $\hat{\Phi}$, must be linear, by Weyl transformation,
 \begin{align}
  \hat{\Phi}(\sum_i \lambda_i \hat{\rho_i}) &= \sum_i \lambda_i \hat{\Phi}(\hat{\rho}_i) \nonumber \\ 
  \Phi(x,p) \star \sum_i \lambda_i W_i(x,p) &= \sum_i \lambda_i \Phi(x,p) \star W_i(x,p), \nonumber
 \end{align}
 where $\Phi(x,p)$ is the operator in phase space representation, and $W_i(x,p)$ are Wigner functions corresponding to density operators $\hat{\rho}_i$. This can also be seen from the construction of star-product \eqref{eq:star}, where the fact that the operators $\vec{\partial}_x$ and $\vec{\partial}_p$ are linear implies that the star-product as a expansion of these operators is linear as well, and therefore is consistent with the formulation of transformation in a GPT.
 
 It is also relatively straightforward to demonstrate that such transformations map any valid state into another by directly converting such result from the Hilbert space formulation via Weyl transformation. Since it is the case that each state in Hilbert space formulation can be mapped to a Wigner function as a valid state, and that each CPTP map is a transformation that maps one valid state into another, it must be the case that these transformations in phase space formulation map any state into valid states. 
 \\
 \\
 \textbf{3. Measurement.} The set of valid effects in the phase space formulation is the set of functions as the Weyl transformation of \textit{positive operator valued measurement}, or \textit{POVMs}, in Hilbert space formulation. In particular, the set of pure effects $\mathcal{E}_i$ are the phase space functions corresponding to projective measurement in the form $\hat{\Pi}_i = \ket{\omega_i}\bra{\omega_i}$, such that
 \begin{align*}
  &\, \mathcal{E}_i(x,p)\\
 = &\, 2\int e^{2ipy/\hbar} \braket{x-y | \hat{\Pi}_i | x+y} \,dy \\
  = &\, 2\int e^{2ipy/\hbar} \braket{x-y| \omega_i}\braket{\omega_i|x+y}\,dy,
 \end{align*}
 which is isomorphic to the set of pure states. With the POVMs as probabilistic mixture of projection operators, any mixed effects can be expanded as a convex sum of pure effects and null effect. From this result, the self-duality of state space and effect space, a property of quantum theory, is recovered. 
 
 Notice that under the Weyl transformation, a probability of some event corresponding to $\omega_i$ occurring is given by the inner product of the effect $\mathcal{E}_i$ and the state $W$,
 \begin{equation*}
  \Tr(\hat{\Pi}_i \hat{\rho}) \rightarrow \iint \mathcal{E}_i(x,p) W(x,p)\,dx\,dp.
 \end{equation*}
Here, the probabilities satisfy $0 \leq \Tr(\hat{\Pi}_i \hat{\rho}) \leq 1$. These projectors corresponding to a measurement summing to identity by the completeness equation, and the set of orthogonal eigenstates is mapped to a set of orthogonal functions in phase space as shown in Appendix F. Therefore, for some projection operators $\hat{\Pi}$ corresponding to multiple outcomes, $ 0 \leq \Tr(\hat{\Pi} \hat{\rho}) \leq \Tr(\mathbb{1} \hat{\rho}) \leq 1$. Hence, for general states and general effects, the inner product has the upper bound
 \begin{align}
  & \Tr\big(\sum_i \mu_i \hat{\Pi}_i \sum_j \lambda_j \ket{\psi_j}\bra{\psi_j}\big) \nonumber \\
 = &\, \sum_{i,j} \mu_i \lambda_j \Tr(\hat{\Pi}_i\ket{\psi_j}\bra{\psi_j}) \nonumber \\
  \leq&\, \sum_{i,j} \mu_i \lambda_j \leq \big(\sum_i \mu_i\big)^2 \big(\sum_j \lambda_j\big)^2 \leq 1,
 \end{align}
 by Cauchy-Schwarz inequality and the condition $\sum_i \mu_i \leq 1$ and $\sum_j \lambda_j \leq 1$. Such inner product is also non-negative, as $0 \leq \mu_i, \lambda_j \leq 1$ and that the trace of pure states and pure effects must be non-negative. By mapping these results into the phase space formulation, these effects satisfy the requirement of the GPT framework as to give valid probability values under inner product with any states in state space.

\subsection{Tunnelling in Post-Quantum Theories} To study the phenomenon of tunnelling in post-quantum theories, one will have to devise a method of consistently extending the existing state space in phase space picture into inclusion of non-physical states. Extension of state space is relatively easy with Wigner function representation, as by varying the values of the Wigner function at some phase space points, it is possible to generate a non-physical state. The difficulty, however, lies in construction of the corresponding effect space, and the physical interpretation of these post-quantum states. While a complete post-quantum theory is not devised in this Article, some preliminary work on Gaussian states is carried out as a precursor towards an eventual development of a post-quantum phase space theory.

A Gaussian bivariate distribution in phase space $W_G(x,p)$ has the form \cite{PSBR}
\begin{equation}
\label{eq:GPSG}
 W_G(x,p) = \frac{1}{2\pi \det \gamma^{1/2}} \exp \left[\minus \frac{1}{2} (\vec{x} - \vec{\mu})^T \gamma^{-1} (\vec{x}-\vec{\mu})\right],
\end{equation}
where $\vec{x}$ is a vector of coordinates, $\vec{\mu}$ is a vector of mean values of coordinates, and $\gamma$ is the covariance matrix, where $\gamma_{ij}$ is the covariance of $i$-th and $j$-th coordinates. A possible way of generalising the existing state space is to include Gaussian distributions that violates the purity condition \eqref{eq:GPSP}, such that
\begin{equation}
\label{eq:GPSGP}
 2 \pi \hbar \iint W_G(x,p) \,dx \,dp > 1.
\end{equation}
The main advantage of considering only Gaussian distributions as post-quantum states is that it is a proper joint-probability distribution in phase space, and therefore such an object gives proper probabilities when one conducts measurement in position or momentum space. Another advantage is that such a distribution is positive over all phase space, and therefore, by \text{Cor. \ref{cor:PTG}}, any analysis of tunnelling on these post-quantum states are dependent only on the effects $\mathcal{E}_{E > E^*}$ and $\mathcal{E}_{\{x|V(x) > E^*\}}$, provided that the effects remain valid under such an extension of state space.

It is interesting to note the physical meaning behind extension of purity. One way of understanding such an extension is to calculate the variances in both position and momentum for $W_G(x,p)$. Since it is the property of a Gaussian bivariate distribution to yield a Gaussian distribution in one coordinate after integrating the distribution over the other coordinate, the variances in $x$ and $p$ are simply $\sigma_x^2 = \gamma_{xx}$ and $\sigma_p^2 = \gamma_{pp}$. For a state with given purity $\mu$,
\begin{align}
\label{eq:GPSPG}
 \mu &= 2 \pi \hbar \iint W_G^2(x,p)\,dx\,dp \nonumber \\
 &= \frac{2 \pi \hbar}{(2\pi)^2 (\det \gamma^{1/2})^2} \iint \exp\left[\minus (\vec{x} - \vec{\mu})^T \gamma^{-1} (\vec{x} - \vec{\mu})\right]\,dx\,dp \nonumber \\
 &= \frac{2\pi\hbar}{4 \pi \sqrt{\sigma_x^2\sigma_p^2-\gamma_{xp}^2}} \geq \frac{\hbar}{2\sqrt{\sigma_x^2\sigma_p^2}} \nonumber \\
 \sigma_x \sigma_p &\geq \frac{\hbar}{2\mu}.
\end{align}
since $\gamma_{xp}^2 \geq 0$. The final result of the calculation closely resembles the uncertainty principle. In fact, if one substitutes $\mu = 1$ as the condition of pure state, the uncertainty principle is recovered. Therefore, by relaxing the purity relation to states that has $\mu > 1$, the lower bound of the uncertainties $\sigma_x \sigma_p$ reaches below the lower bound allowed by quantum theory, and therefore violates the quantum uncertainty principle. This matches the intuition on an extreme case where a Dirac delta function $\delta(x-x_0)\delta(p-p_0)$ represents a distribution with perfect information of both position and momentum, is a state with infinite purity $\mu$ since $\iint \delta^2(x-x_0)\delta^2(p-p_0)\,dx\,dp \rightarrow \infty$, which maximally violates the uncertainty principle. 

It should be noted that certain studies had related purity and uncertainty before \cite{GPSV}; yet the results are limited to purity being between $0 \leq \mu \leq 1$. Nonetheless, the practical implication of this result is that one can simply change the purity condition and generate post-quantum states according to \eqref{eq:GPSGP} by altering the covariant matrix $\gamma$ in \eqref{eq:GPSG}. 

Another way of interpreting the violation of purity condition is to look at the Hilbert space formulation of quantum mechanics. For a general state $W(x,p)$ such that it is real, it is mapped by Wigner transformation to an operator $\hat{\rho}$ such that
\begin{align}
 &\,\hat{\rho}^\dagger \nonumber \\
 =&\, \frac{1}{(2\pi)^2} \left[\iiiint W(x,p) e^{i[\alpha(\hat{X} - x) + \beta(\hat{P} - p)]}\,d\alpha\,d\beta\,dx\,dp \right]^\dagger \nonumber \\
 =&\, \frac{1}{(2\pi)^2} \iiiint W^*(x,p) e^{\minus i[\alpha(\hat{X} - x) + \beta (\hat{P} - p)]}\,d\alpha\,d\beta\,dx\,dp \nonumber \\
 =&\, \frac{1}{(2\pi)^2} \iiiint W(x,p) e^{i[(\minus \alpha)(\hat{X} - x) + (\minus \beta)(\hat{P} - p)]} \nonumber \\
 &\,d(\minus \alpha)\,d(\minus \beta)\,dx\,dp \nonumber \\
=&\, \hat{\rho}.
\end{align}
In other words, the reality condition of a Wigner function is mapped to the Hermiticity of the density operator. Since it is possible to find an eigen-decomposition for any Hermitian operators \cite{QMS}, one can write a density operator as 
\begin{equation}
 \hat{\rho} = \sum_i \lambda_i \ket{\psi_i} \bra{\psi_i},
\end{equation}
where $\lambda_i$ are eigenvalues corresponding to $\ket{\psi_i}$ as $i$-th eigenvector. In this representation, by the orthogonality of eigenvectors, purity is simply
\begin{equation}
 \Tr(\hat{\rho}^2) = \Tr\big(\sum_i \lambda_i^2 \ket{\psi_i}\bra{\psi_i}\big) = \sum_i \lambda_i^2.
\end{equation}
Therefore, for a general normalised post-quantum state with purity $\mu > 1$ represented by a real Wigner function, it can be mapped to a density operator in its eigenbasis $\hat{\rho} = \sum_i \lambda_i \ket{\psi_i} \bra{\psi_i}$, such that
\begin{align}
 \sum_i \lambda_i = 1; \label{eq:GPSN}\\
 \sum_i \lambda_i^2 > 1. \label{eq:GPSPP}
\end{align}
Assume that all $\lambda_i \geq 0$. By some algebraic manipulation,
\begin{align}
 \big(\sum_i \lambda_i \big)^2 = \sum_i \lambda_i^2 + \sum_{i \neq j} \lambda_i \lambda_j \leq \sum_i \lambda_i^2 = 1.
\end{align}
Therefore, it is impossible to satisfy the both \eqref{eq:GPSN} and \eqref{eq:GPSPP} together with the assumptions $\lambda_i \geq 0$. By reductio ad absurdum, it is necessary that the density operator corresponding to post-quantum states with purity greater than unity to have negative eigenvalues. In other words, in the Hilbert space formulation, these post-quantum states violates quantum theory by introducing non-positive definite operators as states. This serves as a warning of altering the state space without correspondingly changing effect space, as this would introduce observable non-physical probabilities beyond the conventional range between 0 and 1.

With the procedure \eqref{eq:GPSGP} ultimately related to introducing negativities into the density operator, there is an obvious problem with the extension towards post-quantum theory. It can no longer be conceived that the set of measurements is invariant under such alteration, for to do so is to allow negative probabilities when one conduct an inner product of an effect corresponding to the eigenstate with negative eigenvalue and the state. Therefore, the set of allowed measurements must shrink accordingly. Despite it is true that by the construction of \eqref{eq:GPSG}, the effect $\mathcal{E}_{\{x | V(x)>E^*\}}$ is still valid, it is not so apparent that $\mathcal{E}_{E > E^*}$ remains valid. If in certain states that the set of effects corresponding to energy measurement is invalid, then one must find another set of effects corresponding to a new energy measurement, which creates great difficulties in interpreting energy as a physical and observable quantity. 

Nonetheless, a qualitative argument could be given here regarding the status of tunnelling as a phenomenon in post-quantum scenarios. Consider the ground state of the quantum harmonic oscillator \eqref{eq:QMQHOGS}: if one varies the state by shrinking $\sigma_x$ and $\sigma_p$ simultaneously, and therefore violating the uncertainty principle and purity condition by altering the covariance matrix $\gamma$ in \eqref{eq:GPSG}, then while the ground state energy effect vector would shrink correspondingly, the variation is continuous and therefore the inner product that specifies the tunnelling rate would still retains negativity in the close vicinity of the quantum case. Therefore, it seems that tunnelling can be a generic property of post-quantum theories and is not unique to quantum theory. However, to fully study the phenomenon of tunnelling in post-quantum theories rigorously, it is necessary to construct a systematic theory that describes the effects on the effect space by alteration of the state space.

\section{Summary}

 We have shown that tunnelling necessitates a negative Wigner function of the state and/or a tunnelling rate operator at some energies as we have defined. This links tunnelling with non-classical probabilistic behaviour (negative quasi-probabilities) in a concrete manner. We also argued that our approach can be used to investigate tunnelling in generalised probabilistic theories, showing the Wigner function representation fits into that framework. 

A very intriguing question for future research is how these results relate to recent studies that suggest negative Wigner function CITE (which is also argue to be equivalent to contextuality CITE) is the 'source' of the putative power of quantum computation \cite{QMWC}. 

\section{Acknowledgements}
We are grateful for discussions with Dan Browne, Jonathan Halliwell, Benjamin Yadin, Andrew Garner, Vlatko Vedral, and Dominic Branford. We acknowledge funding from the EU collaborative project TherMiQ (Grant agreement No.~618074), Wolfson College, University of Oxford and the London Institute for Mathematical Sciences.


\begin{thebibliography}{99}

\bibitem{QTR}
	M. Razaby, \textit{Quantum Theory of Tunneling}, 2nd ed. Singapore: World Scientific Publishing, 2014.

\bibitem{IAJV}
	M. Arndt, T. Juffmann, and V. Vedral, \textit{Quantum physics meets biology}, HSFP. J. \textbf{3(6)}, pp. 386-400 (2009).

\bibitem{IAM}
	A. Ghosh, and S. Mukherjee, \textit{Quantum Annealing and Computation: A Brief Documentary Note}, ArXiv e-prints (2013), 1310.1339v4.

\bibitem{IA}
	A. J. Shields, \textit{Quantum electronics: New light on quantum tunnelling}, Nat. Photonics, \textbf{6}, pp. 348-349 (2012). 

\bibitem{PSH}
	R. L. Hudson, Rep. Math. Phys. \textbf{6-2}, 249-252 (1974).

\bibitem{QMC}
	C. Cohen-Tannoudji, B. Diu, and F. Laloe, \textit{Quantum Mechanics, Vol. 1}. New York, NY: John Wiley \& Sons, 1991. 

\bibitem{QMG}
	D. J. Griffins, \textit{Introduction to Quantum Mechanics}, 2nd ed. Upper Saddle River, NJ: Prentice Hall, 1995.

\bibitem{QMS}
	R. Shankar, \textit{Principles of Quantum Mechanics}, 2nd ed. New York, NY: Plenum Press, 1994.
		
\bibitem{QTRK}
	C. L. Roy, and A. Khan, \textit{A study of tunneling through multibarrier systems}, Phys. Stat. Sol. (b), \textbf{176(1)}, pp. 101-108 (1993). 		
	
\bibitem{QTJM}
	V. Jelic, and F. Marsiglio, \textit{The double well potential in quantum mechanics: a simple, numerically exact formulation}, Eur. J. Phys. \textbf{33}, pp. 1651-1666 (2012).

\bibitem{PSZ}
	C. K. Zachos, D. B. Fairlie, and T. L. Curtright, \textit{Quantum Mechanics in Phase Space: An Overview with Selected Papers}. Singapore: World Scientific (2005).

\bibitem{GPTJ}
	P. Janotta, and H. Hinrichsen, \textit{Generalized Probability Theories: What determines the structure of quantum theory?}, ArXiv e-print (2014), quant-ph/1402.6562v3.
	
\bibitem{GPTH}
	L. Hardy, \textit{Quantum Theory From Five Reasonable Axioms}, ArXiv e-prints (2008), quant-ph/0101012v4.

\bibitem{GPTB}
	J. Barrett, \textit{Information processing in generalized probabilistic theories}, ArXiv e-print (2006), quant-ph/0508211v3.

\bibitem{PSMK}
	A. Mari, K. Kieling, B. Melholt Nielsen, E. S. Polzik, and J. Eisert, \textit{Directly estimating non-classicality}, Phys. Rev. Lett. \textbf{106}, 010403 (2011)
	
\bibitem{PSBR}
	S. D. Barlett, T. Rudolph, and R. W. Spekkens, \textit{Reconstruction of Gaussian quantum mechanics from Liouville mechanics with an epistemic restriction}, ArXiv e-print (2012), quant-ph/1111.5057v2.

\bibitem{QTH}
	T. E. Hartman, \textit{Tunneling of a Wave Packet}, J. Appl. Phys. \textbf{33(12)}, pp. 3427-3433 (1962).

\bibitem{CMK}
	D. Kleppner, and R. J. Kolenkow, \textit{An Introduction to Mechanics}, 2nd ed. Cambridge, UK: Cambridge University Press (2013).

\bibitem{MPH}
	S. Hassini, \textit{Mathematical Physics - A Modern Introduction to Its Foundations}, 2nd ed. Switzerland: Springer International Publishing (2013).

\bibitem{GPSV}
	V. V. Dodonov, \textit{Purity- and entropy-bounded uncertainty relations for mixed quantum states}, J. Opt. B: Quantum Semiclass. Opt. \textbf{4} (2002), pp. 98-108. 

\bibitem{QMWC}
	M. Howard, J. Wallman, V. Veitch, and J. Emerson, \textit{Contextuality supplies the `magic' for quantum computation}, Nature \textbf{510}, pp. 351-355 (2014).

\end{thebibliography}

\onecolumngrid
\appendix
\setcounter{secnumdepth}{1}

\section{Reflection over a barrier}
The general definition \text{Def. \ref{def:QM-GD}} could be thought of as a case of non-classical behaviour in position space. A corresponding non-classical behaviour known as reflection over barrier, could be considered under the same framework as tunnelling in momentum space. Following the same analysis, the classically forbidden region for a particle with energy $E^*$ is $\{p | |p| < \sqrt{2m(E^* - \sup_\mathbb{R} V)} \}$, denoted as $\mathcal{P}(E^*)$, where $p$ is a real variable representing momentum. Such constraint also applies to states with energy $E > E^*$, or that $\mathcal{P}(E^*) \subset \mathcal{P}(E)$. Therefore, one can formulate the definition of reflection over barrier as:

\begin{thmdef}
\label{def:QM-RB}
 \textbf{General Definition of Reflection over Barrier.} For a state in a potential given by $V(x)$, it is reflecting over barrier if and only if there exists some energy $E^*$, such that the probability of locating the state in region where $|p| < \sqrt{2m(E^*-\sup_\mathbb{R} V)}$ is greater than that of measuring the state to have energy $E < E^*$, or mathematically,
 \begin{equation*}
  \exists E^*: P\left(p \middle| |p| < \sqrt{2m(E^*-\sup_\mathbb{R} V})\right) > P(E < E^*).
 \end{equation*}
\end{thmdef}

It should be noted that the two definitions, \text{Def. \ref{def:QM-GD}} and \text{Def. \ref{def:QM-RB}}, has similar structure, and this provides an example of constructing a mathematical formulation of various non-classical quantum processes: starting from certain classical relations between physical quantities, such as position $x$ and energy $E$ in \text{Def. \ref{def:QM-GD}}, one could define some classically forbidden region $\mathcal{X}(E^*)$ which relates the two quantities, and any states that violates this relation are considered to be a non-classical state. 

\section{Recovery of Standard Definition}

The general definition can be applied to an energy eigenstate solution with energy $E_0$ of a rectangular potential barrier with the form \eqref{eq:RPB} with potential height $V_0$ and length $l$ and provide a condition definition of tunnelling in such case. Mathematically, the two probabilities in the general definition becomes
\begin{align}
P(x | V(x) > E^*) &= \begin{cases}
1 & \text{for }E^* = 0 \\
P(x | V(x) = V_0) = P(0 \leq x < l) & \text{for }0 < E^* < V_0 \\
0 & \text{for } E^* \geq V_0;
\end{cases} \label{eq:DTRPBEXE}
\\
&\nonumber\\
P(E > E^*) &= \begin{cases}
1 & \text{for }0 \leq E^* < E_0 \\
0 & \text{for }E^* \geq E_0.  
\end{cases} \label{eq:DTRPBEEE}
\end{align}
Since, by \text{Def. \ref{def:QM-GD}}, a state is tunnelling if and only if there exists some energy $E^*$ such that $P(x | V(x) > E^*) > P(E>E^*)$, the general definition is equivalent to:

\begin{thmdef}
	\label{def:QM-ED}
	\textbf{Equivalent General Definition of Tunnelling for Energy Eigenstates in Rectangular Potential Barrier.} An energy eigenstate with energy $E_0$ in a rectangular potential barrier of the form \eqref{eq:RPB} is tunnelling if and only if $E_0 < V_0$ and the probability of finding the state in the region $0 \leq x < l$ is non-zero.
\end{thmdef}

While this equivalent statement has much similarities with the conventional definition \text{Def. \ref{thm:SD}}, a crucial distinction is that the convention definition provides a less strict condition of having non-zero probability of locating the state in $x \geq 0$. This does illustrate an important distinction between the rationale behind the formulation of the general and conventional definition. The reason why the conventional definition includes the region $x \geq l$ to be classically forbidden region for a state with $E_0 < V_0$ is that it assumes the energy eigenstate to be incoming from $x \rightarrow \minus{\infty}$, and therefore cannot classically pass through the barrier. On the other hand, the classically forbidden region of general definition does not take the initial condition of the state into account, so in this sense it can be equally applied to cases where the state is incoming from $x \rightarrow \minus{\infty}$ and $x \rightarrow \infty$. In this sense, the general definition can be applied to a broader classes of states.

Undeniably, however, the equivalent definition \text{Def. \ref{def:QM-ED}} provides a stricter definition of tunnelling, and hence is not immediately an equivalent statement with the conventional definition \text{Def. \ref{thm:SD}}. However, one can justify that, specifically in the quantum case, the two definitions are equivalent with the inclusion of the continuity condition of wave functions. With this condition, a quantum energy eigenstate can only have non-zero probability in the region $x \geq l$ if there is non-zero probability distribution of locating the state at $0 \leq x < l$, which implies the condition $P(0 \leq x < l) > 0 \rightarrow P(x \geq l) > 0$, assuming the state is incoming from $x \rightarrow \minus{\infty}$. Therefore, the general definition and the continuity condition of wave function leads to the recovery of the conventional definition.

\section{Tunnelling in Classical Phase Space Distributions}
\label{sec:TPS-C}
With the framework of tunnelling in phase space stated in \text{Thm. \ref{thm:PSTNS}} and \text{Cor. \ref{col:PSTN}}, it is straightforward, then, to demonstrate that it is impossible for classical systems to tunnel in the following manner:

\begin{thm}
	\label{thm:PSTC}
	\textbf{Impossibility of Tunnelling in Classical Systems.} A classical system, specified with the Hamiltonian $H(x,p)$ and the distribution function $f(x,p)$, cannot tunnel. 
\end{thm}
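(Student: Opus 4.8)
The plan is to invoke \text{Cor. \ref{col:PSTN}}, which furnishes a sufficient condition for a state to be non-tunnelling: for every $E^*$, both the distribution function and the tunnelling rate operator $\mathcal{E}_{E > E^*} - \mathcal{E}_{\{x|V(x)>E^*\}}$ must be non-negative everywhere in phase space. The whole task thus reduces to verifying these two non-negativity conditions for a generic classical system, after which the impossibility of tunnelling follows immediately.

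First I would dispose of the distribution function. Since a classical state is represented by a genuine Liouville probability density $f(x,p)$, it is non-negative everywhere by construction, so that half of the hypothesis of \text{Cor. \ref{col:PSTN}} holds trivially. Next I would compute the two classical effects explicitly. The position-type effect is the indicator of the classically forbidden region, $\mathcal{E}_{\{x|V(x)>E^*\}}(x,p) = \mathbb{1}[V(x) > E^*]$. For the energy effect I would start from the classical energy eigenstate $f_{E'}(x,p) \sim \delta[H(x,p) - E']$ established earlier and integrate over all $E' > E^*$, giving $\mathcal{E}_{E > E^*}(x,p) = \mathbb{1}[H(x,p) > E^*]$. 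Both are indicator functions and hence individually non-negative.

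The crux of the argument --- and the step I expect to be the main obstacle --- is showing that the \emph{difference} of these two indicators is non-negative, not merely each term separately. Here I would exploit the defining feature of classical mechanics that the kinetic energy $p^2/2m$ is non-negative. Consequently, at any phase space point where $V(x) > E^*$ one automatically has $H(x,p) = p^2/2m + V(x) > E^*$, so the support of $\mathcal{E}_{\{x|V(x)>E^*\}}$ is contained in that of $\mathcal{E}_{E > E^*}$. Wherever the subtracted indicator equals one, the other equals one as well, whence
\begin{equation*}
 \left[\mathcal{E}_{E > E^*} - \mathcal{E}_{\{x|V(x)>E^*\}}\right](x,p) \geq 0
\end{equation*}
at every $(x,p)$. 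This is really the phase space restatement of the very classical reasoning about nested forbidden regions that motivated \text{Def. \ref{def:QM-GD}}.

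Finally, since both non-negativity conditions hold for every choice of $E^*$, \text{Cor. \ref{col:PSTN}} applies and the classical system cannot tunnel, completing the proof. The one subtlety to guard against is that the region-inclusion argument relies on the standard kinetic-plus-potential form $H(x,p) = p^2/2m + V(x)$; for this class of Hamiltonians the containment is automatic, whereas a more exotic Hamiltonian would require a separate verification that the energy-forbidden region contains the position-forbidden region.
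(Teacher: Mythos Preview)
Your proposal is correct and follows essentially the same route as the paper: compute both classical effects as indicator functions, use non-negativity of the kinetic energy to obtain the support inclusion $\{V(x)>E^*\}\subset\{H(x,p)>E^*\}$, note that the classical $f(x,p)$ is a genuine probability density, and conclude via \text{Cor.~\ref{col:PSTN}}. Your write-up is in fact slightly cleaner on two points: you state the sign of the tunnelling rate operator correctly (the paper's proof has a sign slip in the displayed inequality), and you flag explicitly that the argument uses the standard form $H=p^2/2m+V(x)$.
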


\begin{proof}
	For a general classical system, the effect corresponding to probability $P(x | V(x) > E^*)$ is 
	\begin{equation*}
	\mathcal{E}_{\{x|V(x) > E^*\}} (x,p) = \begin{cases}
	1 & \text{for } V(x) > E^* \\
	0 & \text{otherwise,}
	\end{cases}
	\end{equation*} 
	and similarly, the effect corresponding to probability $P(E > E^*)$ is 
	\begin{equation*}
	\mathcal{E}_{E > E^*} (x, p) = \begin{cases}
	1 & \text{for } H(x,p) > E^* \\
	0 & \text{otherwise.}
	\end{cases}
	\end{equation*}
	However, since the set $\{(x,p) | V(x) > E^*\} \subset \{(x,p) | H(x,p) > E^*\}$, as 
	\begin{align*}
	(x^*, p^*) \in \{(x,p) | V(x) > E^*\} &\leftrightarrow V(x^*) > E^* \\
	&\rightarrow \frac{{p^*}^2}{2m} + V(x^*) > E^* \\
	&\leftrightarrow H(x^*, p^*) > E^* \\
	& \leftrightarrow (x^*, p^*) \in \{(x,p) | H(x,p) > E^*\},
	\end{align*}
	therefore $V(x) > E^*$ implies $H(x,p) > E^*$ for a classical system, which suggests that the function $[\mathcal{E}_{\{x|V(x) > E^*\}}-\mathcal{E}_{E > E^*}] (x,p) \geq 0$. Also, since $f(x,p)$ in this scenario is a joint probability function, $f(x,p)$ must be non-negative over all phase space. Therefore, by \text{Cor. \ref{col:PSTN}}, a classical system cannot tunnel. 
\end{proof}

In some sense, this proof is anticipated by the design of the general definition, as part of the original intentions of constructing such a definition. However, this exercise is still valuable, because the previous discussion in the main text is largely based on classical particles rather than phase space ensembles. Another important point is that this proof illustrates is the dual nature of tunnelling in phase space, as tunnelling does not only depend on condition on the state, i.e. the distribution function, but also the difference in effects $\mathcal{E}_{\{x|V(x) > E^*\}}-\mathcal{E}_{E > E^*}$. In particular, both functions have to be non-negative for a state to not tunnel. This concurs with the discussion on \text{Cor. \ref{cor:PTG}}, where a Gaussian state has a positive Wigner function representation, yet in certain scenarios such states could indeed tunnel.

\section{Quantum Tunnelling of Wave Packets}

While it is rather difficult to obtain the energy distribution of a wave packet due to the piecewise nature of the energy eigenstate wave functions, there are certain features of the problem that simplifies the analysis in principle. Firstly, despite the time evolution of the state, the cumulative probability distribution $P(E > E^*)$ is invariant. This is due to the fact that a general state can be considered as a superposition of energy eigenstates $\ket{\psi} = \sum_i c_i \ket{E_i}$, and under the unitary operator $\exp[\minus i \hat{H} t/\hbar]$, 
\begin{align}
e^{\minus i\hat{H}t/\hbar} \ket{\psi} &= \sum_i c_i e^{\minus i \hat{H}t/\hbar} \ket{E_i} \nonumber \\
&= \sum_i c_i e^{\minus iE_it/\hbar} \ket{E_i}.
\end{align}
Therefore, the probability $P(E > E^*)$ is the sum of the norm squared amplitude corresponding to energy eigenstates with energy greater than $E^*$, or
\begin{equation}
P(E > E^*) = \sum_{E_i > E^*} |c_i|^2,
\end{equation}
which is independent of time. Hence, the dynamical nature of the wave packet only changes the probability $P(x | V(x) > E^*)$. 

Secondly, given that for a rectangular potential barrier, $P(x | V(x) > E^*) = P(0 \leq x < l)$ is a constant value between $0 < E^* < V_0$ as shown in \eqref{eq:DTRPBEXE}, it can then be demonstrated that the state is tunnelling if and only if $P(E > V_0) < P(0 \leq x < l)$, by demonstrating that $P(E > V_0) < P(0 \leq x < l)$ is equivalent to the condition $\exists E^*: P(x | V(x) > E^*) > P(E > E^*)$, i.e. the condition of a tunnelling state:

\begin{enumerate}
	\item $\mathbf{\exists E^*: P(x | V(x) > E^*) > P(E > E^*) \rightarrow P(E > V_0) < P(0 \leq x < l)}.$ Assume it is the case that $\exists E^*: P(x | V(x) > E^*) > P(E > E^*)$ and $P(E > V_0) \nless P(0\leq x < l)$. Construct $P (x | V(x) > E^*)$ according to \eqref{eq:DTRPBEXE},
	\begin{align*}
	P(x | V(x) > E^*) &= \begin{cases}
	1 & \text{for } E^* = 0 \\
	P(0 \leq x < l) & \text{for } 0 < E^* < V_0 \\
	0 & \text{for } E^* \geq V_0
	\end{cases}\\
	&\leq \begin{cases}
	1 & \text{for } E^* = 0 \\
	P(E > V_0) & \text{for } 0 < E^* < V_0 \\
	0 & \text{for } E^* \geq V_0
	\end{cases} \\
	&\leq \begin{cases}
	1 & \text{for } E^* = 0 \\
	P(E > E^*) & \text{for } 0 < E^* < V_0 \\
	0 & \text{for } E^* \geq V_0 
	\end{cases} \\
	& \leq P(E > E^*),
	\end{align*}
	where in the region $0 < E^* < V_0$, $P(E > V_0) = P(E > E^*) - P(V_0 > E > E^*) \leq P(E > E^*)$, and in the region $E^* \geq V_0$, $0 \leq P(E > E^*)$. Therefore, for all possible $E^*$, $P(x | V(x) > E^*) \leq P(E > E^*)$, which contradicts with the premise $\exists E^*: P(x|V(X) > E^*) > P(E>E^*)$. Therefore, by reductio ad absurdum, $\exists E^*: P(x | V(x) > E^*) > P(E > E^*) \rightarrow P(E > V_0) < P(0 \leq x < l)$.
	
	\item $\mathbf{P(E > V_0) < P(0 \leq x < l) \rightarrow \exists E^*: P(x | V(x) > E^*) > P(E > E^*).}$ Since $P(0 \leq x < 1)$ is simply $P(x | V(x) > V_0)$,  $P(E > V_0) > P(x | V(x) > V_0)$ implies $\exists E^*: P(x | V(x) > E^*) > P(E > E^*)$. 
\end{enumerate}

What the previous exercise shows is that it suffices to use two probabilities, $P(E > V_0)$ and $P(0 \leq x < l)$ of a state in rectangular potential barrier to determine whether the state is tunnelling or not, which could be applied to our example of Gaussian state tunnelling through a rectangular barrier. Using the approximation that the energy eigenstates are roughly free momentum eigenstates, one can conduct a Fourier transform on \eqref{eq:QMGWP} and obtain the Gaussian wave packet has a probability distribution over momentum space as
\begin{equation}
P(p) = \sqrt{\frac{2}{\pi}} \frac{\sigma_x}{\hbar} \exp{\left[\minus\frac{2\sigma_x(p-p_0)^2}{\hbar}\right]},
\end{equation}
which is also Gaussian, as expected from the property of Fourier transform of Gaussian probability distributions. Therefore, the cumulative energy probability is given by
\begin{align*}
P(E > E^*) &= \int_{\minus \infty}^{\minus \sqrt{2mE^*}}P(p)\,dp + \int_{\sqrt{2mE^*}}^\infty P(p)\,dp \\
&= 1 - \frac{1}{2}\left[\erf\left(\frac{\sqrt{2}\sigma_x}{\hbar}(\sqrt{2mE^*}-p_0)\right)+\erf\left(\frac{\sqrt{2}\sigma_x}{\hbar}(\sqrt{2mE^*}+p_0)\right)\right],
\end{align*}
where $\erf$ is \textit{error function}. While it is difficult to find the close form of the probability $P(0 \leq x < l)$, via some numerical simulation, it is possible to obtain this probability as a function of time, as shown as \text{Fig. \ref{fig:GWPT}}.

Regardless, this analysis demonstrates how the general definition of tunnelling can be used to determine whether a general state as a superposition of energy eigenstates is tunnelling or not, as well as to assign a quantitative value to how such a state violates the classicality constraints. It therefore provides evidence on how the general definition satisfies the criteria of providing quantitative criteria of tunnelling for systems with states without definite energy. 

\section{Quantum Tunnelling of Ground State of Quantum Harmonic Oscillator}

For a quantum harmonic oscillator with the form \eqref{eq:QMQHOH}, the ground state wave function $\psi_0(x)$ is given by \eqref{eq:QMQHOGS}. Such wave function can be mapped to a Wigner function $W_0(x,p)$ by Weyl transformation to be
\begin{equation}
W_0(x,p) = \frac{1}{\pi\hbar} \exp\left(\minus\frac{m\omega x^2}{\hbar}\right)\exp\left(\minus\frac{p^2}{m\omega\hbar}\right).
\end{equation}
Notice here that it can be clearly seen how such Gaussian state saturates the uncertainty principle, as $\sigma_x \sigma_p = \sqrt{\hbar/2m\omega} \sqrt{m\omega\hbar/2} = \hbar / 2$. Also, such a Wigner function, as a map of a Gaussian state, is a bivariate Gaussian distribution which is positive over all phase space. This example is hence a direct verification of Hudson's theorem.

As stated in the main text, the analysis of tunnelling for Gaussian states lies predominantly on the effects. First of all, consider the effect $\mathcal{E}_{\{x | V(x) > E^*\}}$. By the condition $V(x) > E^*$, the region corresponding to each $E^*$ can be found by
\begin{align}
\frac{1}{2}m\omega^2 x^2 &> E^* \nonumber  \\
x^2 &> \frac{2E^*}{m\omega^2} \nonumber \\
|x| &> \sqrt{\frac{2E^*}{m\omega^2}}, 
\end{align}
which implies the function $\mathcal{E}_{\{x | V(x) > E^*\}}$ has the form
\begin{equation}
\label{eq:PSTEXE}
\mathcal{E}_{\{x | V(x) > E^*\}} = \begin{cases}
1 & \text{for }x > \sqrt{2E^*/m\omega^2} \text{ or }x < \minus\sqrt{2E^*/m\omega^2} \\
0 & \text{otherwise},
\end{cases}
\end{equation}
which is simply its classical counterpart. Secondly, the effect $\mathcal{E}_{E > E^*}$ is simply 
\begin{equation}
\label{eq:PSTEEE}
\mathcal{E}_{E > E^*} (x,p) = 2 \pi \hbar \sum_{n=n^*}^\infty W_n(x,p),
\end{equation}
where $W_n(x,p)$ is the Wigner function for $n$-th energy eigenstate of quantum harmonic oscillator, and $n^*$ is the minimum quantum number that corresponds to an energy eigenstate with energy greater than $E^*$, or 
\begin{equation}
\label{eq:PSTN}
n^* = \left\lceil \frac{E^*}{\hbar \omega} - \frac{1}{2} \right\rceil.
\end{equation}

For analysis of tunnelling of ground state, the most relevant energy $E^*$ is the ground state energy at $E^* = \hbar \omega/2$. Therefore, consider the effects $\mathcal{E}_{\{x | V(x) > \hbar \omega / 2\}}$ and $\mathcal{E}_{E > \hbar \omega / 2}$, which can be calculated to be
\begin{align}
\mathcal{E}_{\{x | V(x) > \hbar \omega / 2\}}(x,p) &= \begin{cases}
1 &\text{for }x > \sqrt{\hbar/m\omega} \text{ or }x < \minus\sqrt{\hbar/m\omega} \\
0 &\text{otherwise;}
\end{cases} \\
\nonumber \\
\mathcal{E}_{E > \hbar\omega/2}(x, p) &= 1-2\pi\hbar W_0(x,p) \nonumber \\
&= 1 - 2 \exp\left(\minus\frac{m\omega x^2}{\hbar}\right)\exp\left(\minus\frac{p^2}{m\omega\hbar}\right).
\end{align}

An interesting point to note here is that despite the effect corresponding to the probability $E > \hbar \omega / 2$ is simply the difference between identity $\mathbb{1}$, and the rescaled Wigner function of ground state of quantum harmonic oscillator $W_0(x,p)$, which is a positive function, the function as the difference between the two effects still contains negativities. Despite the two fundamental components to the effect are positive function and can sometimes interpreted as classical effects and distributions, ultimately the combination of the two leads to non-classical behaviours. 

\begin{figure}[!htb]
	\centering
	\includegraphics[scale=0.375]{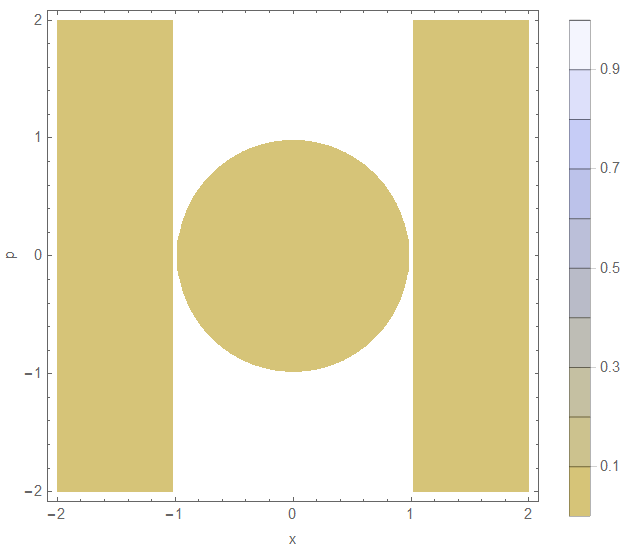}
	\includegraphics[scale=0.375]{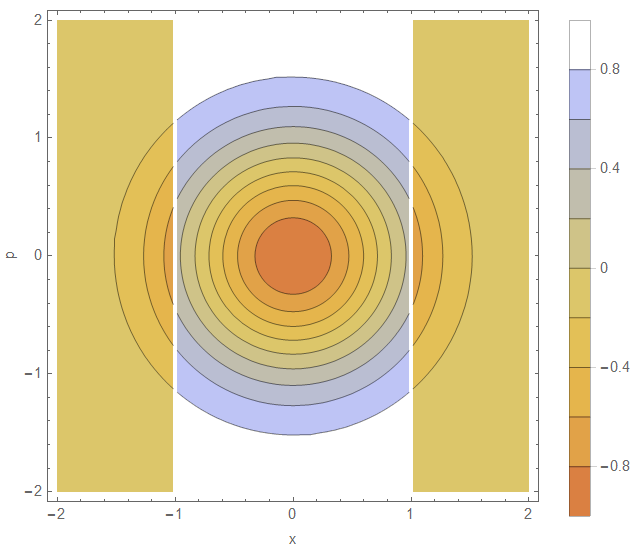}
	\caption{Contour plots of the function $[\mathcal{E}_{E > \hbar\omega/2} - \mathcal{E}_{\{x | V(x) > \hbar \omega / 2\}}](x,p)$ for the classical case (left) and quantum case (right).  \label{fig:QTDE}}
\end{figure}

Moving on with the analysis, the difference between the two effects is
\begin{align}
&[\mathcal{E}_{E > \hbar\omega/2} - \mathcal{E}_{\{x | V(x) > \hbar \omega / 2\}}](x,p) \nonumber \\
=& \begin{cases}
\minus 2 \exp\left(\minus\frac{m\omega x^2}{\hbar}\right)\exp\left(\minus\frac{p^2}{m\omega\hbar}\right) & \text{for }x > \sqrt{\hbar/m\omega} \text{ or }x < \minus\sqrt{\hbar/m\omega} \nonumber \\
1 - 2 \exp\left(\minus\frac{m\omega x^2}{\hbar}\right)\exp\left(\minus\frac{p^2}{m\omega\hbar}\right) & \text{otherwise},
\end{cases} 
\end{align}
which clearly shows that such function contains negativities. By \text{Cor. \ref{cor:PTG}}, the ground state of a quantum harmonic oscillator can indeed tunnel. As a comparison and an example to the discussion in the main text regarding quantum and classical cases of tunnelling, the classical and quantum version of the tunnelling rate operator is shown in \text{Fig. \ref{fig:QTDE}}, which clearly demonstrates that only the quantum case of the function contains negativities. 

Although it is shown by \text{Cor. \ref{cor:PTG}} that the ground state can tunnel, to demonstrate that it is indeed tunnelling is to consider the integral
\begin{align*}
&\iint [\mathcal{E}_{E > \hbar\omega/2} - \mathcal{E}_{\{x | V(x) > \hbar \omega / 2\}}](x,p) W_0(x,p) \,dx \,dp \\
=&\,\minus \iint \mathcal{E}_{\{x | V(x) > \hbar \omega / 2\}}(x,p) W_0(x,p)\,dx\,dp \\
=&\,\minus \iint_{-\infty}^{-\sqrt{\hbar/m\omega}} W_0(x,p) \,dx\,dp  - \iint_{\sqrt{\hbar/m\omega}}^\infty W_0(x,p)\,dx\,dp,
\end{align*}
since $\iint \mathcal{E}_{E > \hbar\omega/2}(x,p) W_0(x,p)\,dx\,dp = 0$. By Hudson's theorem, $W_0(x,p)$ is positive, and therefore the last line of the derivation is negative. By \text{Thm. \ref{thm:PSTNS}}, the ground state of quantum harmonic oscillator is indeed a tunnelling state. 

\section{Orthogonality of Wigner Functions of Eigenstates}
The Weyl transformation between Hermitian operator and real function in phase space provides a method of generating sets of orthogonal functions in phase space. A Hermitian density operator can generally be expressed as $\hat{\rho} = \sum_{i,j} \lambda_{i,j} \ket{\omega_i} \bra{\omega_j}$, where $\ket{\omega_i}$ are eigenbases of another Hermitian operator. Consider, then, the Weyl transformation of the operator $\ket{\omega_i}\bra{\omega_j}$ as $F_{i,j}$, such that for a Wigner function $W(x,p)$,
\begin{equation}
\hat{\rho} = \sum_{i,j} \lambda_{i,j} \ket{\omega_i}\bra{\omega_j} \rightarrow W(x,p) = \sum_{i,j} \lambda_{i,j} F_{i,j}(x,p),
\end{equation}
where the set of functions $F_{i,j}$ are orthogonal,
\begin{align}
&\,\Tr\left(\ket{\omega_{i_1}}\braket{\omega_{j_1}|\omega_{j_2}}\bra{\omega_{i_2}}\right) \nonumber \\
=&\, 2\pi\hbar\iint F_{i_1,j_1}(x,p)F^*_{i_2,j_2}(x,p)\,dx\,dp = \delta_{i_1,j_1} \delta_{i_2,j_2},
\end{align}
and completeness of the operators $\ket{\omega_i}\bra{\omega_j}$ in Hermitian matrices is mapped to the completeness of the corresponding functions $F_{i,j}(x,p)$ in real functions. Therefore, the coefficients $\lambda_{i,j}$ can be calculated by
\begin{equation}
\lambda_{i,j} = \Tr(\hat{\rho}\ket{\omega_j}\bra{\omega_i}) = 2\pi\hbar\iint W(x,p) F^*_{i,j}(x,p)\,dx\,dp.
\end{equation}
An example of such eigenbasis decomposition is momentum eigenbasis $\ket{p_i}\bra{p_j}$, such that
\begin{align}
2\pi\hbar F_{i, j} (x,p) &= \frac{2\pi\hbar}{\pi\hbar}\int e^{2ipy/\hbar} \braket{x-y|p_i}\braket{p_j|x+y}\,dy \nonumber \\
&= \frac{2}{2\pi\hbar} \int e^{2ipy/\hbar} e^{ip_i(x-y)/\hbar} e^{\minus ip_j(x+y)/\hbar}\,dy \nonumber \\
&= \frac{1}{2\pi} \delta[p - \frac{1}{2}(p_i + p_j)]e^{i(p_i-p_j)x/\hbar},
\end{align}
which is simply the Fourier transform in the position coordinate, with $p_i + p_j$ and $p_i - p_j$ as two independent parameters. 

\end{document}